\DeclareMathOperator{\E}{\mathbb{E}}
\crefname{section}{Sec.}{Secs.}
\Crefname{section}{Section}{Sections}
\Crefname{table}{Table}{Tables}
\crefname{table}{Tab.}{Tabs.}
\ificcvfinal\pagestyle{empty}\fi
\begin{document}

\title{Self-supervised Image Denoising with Downsampled Invariance Loss and Conditional Blind-Spot Network}

\author{Yeong Il Jang$^1$~~~~~~~~
	    Keuntek Lee$^1$~~~~~~~~
	    Gu Yong Park$^1$~~~~~~~~
	    Seyun Kim$^2$~~~~~~~~
	    Nam Ik Cho$^1$\\
$^1$Department of ECE, INMC, Seoul National University ~~~~~ $^2$Gauss Labs Inc.\\
{\tt\small \{jyicu, leekt000, pgy9134\}@snu.ac.kr, seyun.kim@gausslabs.ai, nicho@snu.ac.kr}}

\maketitle
\ificcvfinal\thispagestyle{empty}\fi

\begin{abstract}
	\label{abstract}
	There have been many image denoisers using deep neural networks, which outperform conventional model-based methods by large margins. Recently, self-supervised methods have attracted attention because constructing a large real noise dataset for supervised training is an enormous burden. The most representative self-supervised denoisers are based on blind-spot networks, which exclude the receptive field's center pixel. However, excluding any input pixel is abandoning some information, especially when the input pixel at the corresponding output position is excluded. In addition, a standard blind-spot network fails to reduce real camera noise due to the pixel-wise correlation of noise, though it successfully removes independently distributed synthetic noise. Hence, to realize a more practical denoiser, we propose a novel self-supervised training framework that can remove real noise. For this, we derive the theoretic upper bound of a supervised loss where the network is guided by the downsampled blinded output. Also, we design a conditional blind-spot network (C-BSN), which selectively controls the blindness of the network to use the center pixel information. Furthermore, we exploit a random subsampler to decorrelate noise spatially, making the C-BSN free of visual artifacts that were often seen in downsample-based methods. Extensive experiments show that the proposed C-BSN achieves state-of-the-art performance on real-world datasets as a self-supervised denoiser and shows qualitatively pleasing results without any post-processing or refinement.
\end{abstract}

\begin{figure}[t]
	\centering
	\captionsetup[subfigure]{justification=centering}
	\subfloat[Noisy\\28.48dB / 0.9011~\label{fig:intro:noisy}]	{\includegraphics[width=.48\linewidth]{./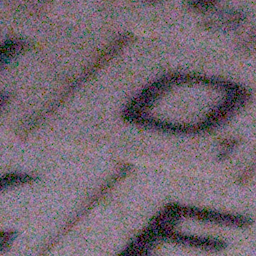}}
	\hfill
	\subfloat[CVF-SID~\cite{neshatavar2022cvf} \\ 34.21dB / 0.9381\label{fig:intro:cvf_sid}]	{\includegraphics[width=.48\linewidth]{./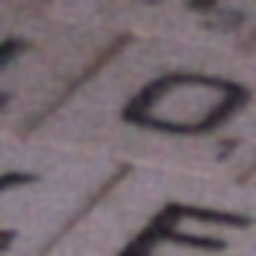}}
	\\
	\subfloat[AP-BSN~\cite{lee2022ap} \\34.45dB / 0.9081\label{fig:intro:ap_bsn}]	{\includegraphics[width=.48\linewidth]{./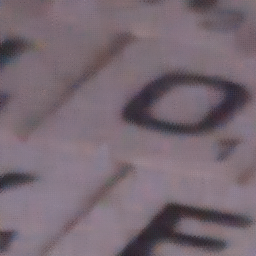}}
	\hfill
	\subfloat[C-BSN (Ours)\\\textbf{36.31dB / 0.9483}\label{fig:intro:ours}]
	{\includegraphics[width=.48\linewidth]{./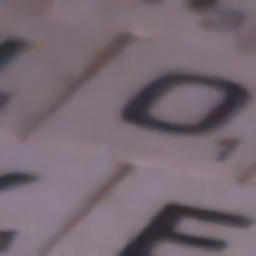}}
	\vspace{2mm}
	\caption{\textbf{Visual comparison of denoised images on SIDD validation~\cite{abdelhamed2018high}.} 
		Our C-BSN shows better details and no artifacts without post-processing or refinement.
		Best viewed in pdf.
	}
	\vspace{-5mm}
	\label{fig:intro}
\end{figure}

\section{Introduction}
\label{sec:intro}
Image denoising aims to recover a clean image from its corrupted counterpart. Recently, image denoisers using convolutional neural networks (CNNs) have achieved great performances, significantly outperforming conventional model-based ones \cite{zhang2017beyond,zhang2018ffdnet,tai2017memnet}. They trained networks by minimizing the difference between the network outputs and the ground-truth clean images. In early works, they assumed the camera noise as an additive white Gaussian noise (AWGN) and generated a large number of clean-noisy image pairs for the supervised training. However, the denoisers trained with AWGN fail to generalize to real-world camera noises due to the difference between the Gaussian and real noise distributions~\cite{guo2019toward}.
Specifically, real noise follows a more complicated distribution than a simple Gaussian and gets more correlated spatially and chromatically while passing through an in-camera image processing pipeline, such as demosaicing that involves the computation using adjacent pixels.

Some researchers attempted to find a more realistic noise model to deal with real noise. In the case of camera-raw images, noise can be modeled with a relatively simple distribution such as heteroscedastic Gaussian~\cite{foi2008practical}. Hence, a raw image added with such synthetic noise is passed through a camera image signal processor (ISP) model to generate a realistic noisy sRGB image \cite{guo2019toward,zamir2020cycleisp}. Other works synthesized realistic noise using generative models \cite{chen2018image,cai2021learning,jang2021c2n,abdelhamed2019noise}.
Another approach is to construct paired real noise datasets from real photos like DND~\cite{plotz2017benchmarking} and SIDD~\cite{abdelhamed2018high}. Training in a supervised manner with those datasets successfully reduced the noise of real cameras \cite{anwar2019real,zamir2020learning,zamir2021multi}. However, acquiring aligned clean images corresponding to noisy ones requires a series of static photos of the same scene. It is costly or even impossible in some cases, such as medical images, since it requires strictly controlled capturing and complicated post-processing. Also, since they used several cameras in specific environments for capturing real noises, they might have different distributions from the ones captured from other cameras and from the same cameras with different shooting environments.

To mitigate the necessity of large aligned datasets, self-supervised denoising that requires only noisy images has been proposed. The most representative methods are based on blind-spot networks (BSN), where each output pixel is estimated from the surrounding noisy pixels except for the corresponding one. It enables the network to learn with the self-supervised loss function, where the same noisy images are used as both input and target. The idea of blind-spot prevents the network from converging to a trivial identity function. The BSN is shown to converge to the clean image under the assumption that the expectation of the noise is zero and the noise is pixel-wise independent. They imposed blindness to the network by masking the input image \cite{krull2019noise2void,batson2019noise2self} or by designing networks that structurally exclude the central pixel from the receptive fields \cite{laine2019high,wu2020unpaired,lee2022ap}. However, the BSN-based self-supervised algorithms have two limitations; 1) The network cannot utilize the center pixel which is the most informative. 2) It is not applicable to real noise since it has a pixel-wise correlation in the sRGB domain~\cite{lee2022ap}. 

In this paper, we propose a novel self-supervised learning framework to denoise real noise without the blind-spot, \ie, with the center pixel information. Our framework overcomes the above-stated limitations by deriving a novel downsampled invariance loss function. The downsampled invariance loss employs a novel conditional blind-spot network (C-BSN) and random subsampler. Specifically, our C-BSN conditionally controls its blindness by switching the masked convolution operations. It allows the network to be regularized by its blind-spot counterpart, which prevents the trivial solution. Furthermore, we impose the loss on randomly downsampled subimage so that the correlation of the noise is weakened without inducing visual artifacts. In addition, we augment the loss with a blind self-supervised loss for stabilizing the training. Extensive experiments have been conducted to evaluate the proposed framework, which validates that the C-BSN outperforms existing self-supervised denoisers and even some supervised methods trained with real noise datasets.

The contributions of our method are summarized as follows:
\begin{itemize}
	\item We propose a novel self-supervised denoising framework that can be processed without a blind-spot. We theoretically derive the upper bound of the self-supervised loss as downsampled invariance loss, which exploits masked output as the regularization of the denoised image without masking. In addition, the proposed method does not require post-processing or noise statistics.
	\item To apply downsampled invariance loss, we propose a novel conditional blind-spot network named C-BSN, which conditionally controls the blindness of the network. To deal with the spatial correlation of the real camera noise, a random subsampler is proposed to avoid visual artifacts.
	\item The C-BSN shows state-of-the-art performance in real-world sRGB benchmarks DND~\cite{plotz2017benchmarking} and SIDD~\cite{abdelhamed2018high}, as shown in Figs.~\ref{fig:intro}, \ref{fig:qual}, and \ref{fig:qual_sidd}.
\end{itemize}

\section{Related Works}
\label{sec:related}
\noindent\textbf{Deep Image Denoising} \label{subsec:deepdenoising}
Image denoisers based on Convolutional Neural Networks (CNNs) have outperformed conventional model-based algorithms. In early works, deep image denoisers were trained with large datasets consisting of clean images and noisy ones corrupted by synthetic Gaussian noise. DnCNN~\cite{zhang2017beyond} proposed a CNN denoiser with batch normalization and residual learning. Following DnCNN, many networks with more sophisticated architectures have been proposed \cite{zhang2018ffdnet,mao2016image,tai2017memnet,liu2018non,zhang2019residual}. However, denoisers trained with synthetic Gaussian noise could not generalize well for denoising real-world noisy images. To alleviate this problem, CBDNet~\cite{guo2019toward} synthesized heteroscedastic Gaussian noise and processed it through the camera ISP model. Some works simulated realistic noise using generative adversarial network (GAN)~\cite{chen2018image,cai2021learning,jang2021c2n} or flow-based methods \cite{abdelhamed2019noise,maleky2022noise2noiseflow,kousha2022modeling}.
With the development of real-world sRGB datasets~\cite{abdelhamed2018high,plotz2017benchmarking}, recent denoisers have been trained and tested on these datasets, \cite{anwar2019real,zamir2020learning,zamir2021multi,yue2019variational,
	jang2020dual,kim2020transfer,soh2022variational,tu2022maxim}, demonstrating that the real noisy images could be successfully denoised. Moreover, it has been shown that earlier denoisers can also work better by retraining with these datasets. However, collecting a large dataset is laborious and costly. Moreover, the networks trained with a specific dataset may not function properly on images captured by other cameras, not included in the dataset, or images from other domains, such as medical, electron, and ultra-sonic.

\noindent\textbf{Self-supervised Deep Image Denoising} 
In order to overcome the lack of aligned real noisy-clean image pairs, self-supervised learning that trains denoiser with solely noisy images has been proposed. Lehtinen \etal~\cite{lehtinen2018noise2noise} proposed Noise2Noise where training pairs are two noisy images of the same scene. 
Noise2Void~\cite{krull2019noise2void} and Noise2Self~\cite{batson2019noise2self} introduced self-supervised denoisers that require only single noisy images by masking the center pixel of the receptive field.
Without masking input pixels, Laine \etal~\cite{laine2019high} proposed a structurally blind-spotted network with a concatenation of half-plane receptive field U-Nets~\cite{ronneberger2015u}. 
Wu \etal~\cite{wu2020unpaired} introduced dilated blind-spot network (D-BSN), where masked convolution is followed by dilated convolutions and $1\times1$ convolutions, strictly excluding the center pixel from the receptive field. Self2Self~\cite{quan2020self2self} trained the denoiser with a single noisy image by applying Bernoulli dropout.
Neighbor2Neighbor~\cite{huang2021neighbor2neighbor} proposed a self-supervised loss between two subsampled images. Also, assuming known noise characteristics, Noisy-as-clean~\cite{xu2020noisy} and Noisier2noise~\cite{moran2020noisier2noise} added a proper noise to the noisy image and used the pair as a training set. Recorrupted2Recorrupted~\cite{pang2021recorrupted} generated pairs of Gaussian-corrupted images to be used as training pairs. In general, real noises of the sRGB domain have unknown or non-stationary statistics and are spatially correlated, making the above methods less applicable.

Recently, some works have been proposed to overcome the limitations of the above BSN-based methods. 
To mitigate the spatial correlation of real noise, AP-BSN~\cite{lee2022ap} utilized pixel downshuffle (PD)~\cite{zhou2020awgn} asymmetrically. 
During training, the network was trained using high strides where the assumption of independence holds. During testing, low strides were used to preserve more pixel information.
CVF-SID~\cite{neshatavar2022cvf} disentangled a clean image and signal-dependent noise from real-world noisy input.
To utilize information of center pixel, Laine \etal~\cite{laine2019high} post-processed the denoised output to be the posterior with the known noise model in a Bayesian approach.
Noise2Same~\cite{xie2020noise2same} derived the upper bound of self-supervised loss without introducing the blind-spot. Blind2Unblind~\cite{wang2022blind2unblind} proposed re-visible loss that makes blind-spot visible again. However, to the best of our knowledge, there has been no research that handles both problems (use of blind-spot and handling spatial correlation) for self-supervised image denoising.

\section{Method}
\label{sec:method}

\begin{figure*}[t]
	\centering
	\includegraphics[width=\linewidth]{./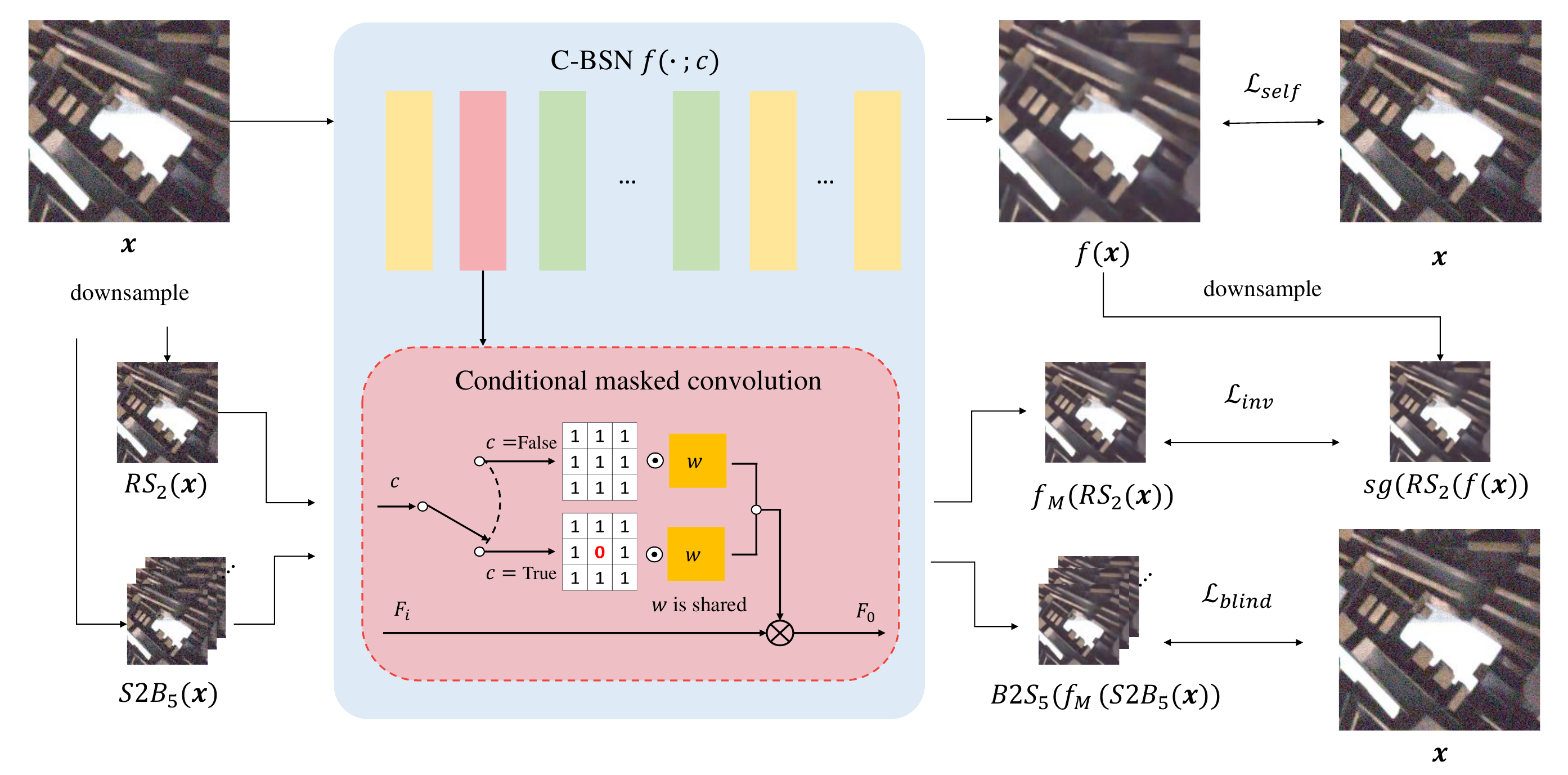}
	\caption{\textbf{Overview of the proposed C-BSN framework.}
		Illustration of the C-BSN architecture and loss functions.
		For simplicity, condition variable $c$ is omitted in $f$ when $c=$ False, and $f_M$ denotes the blind-spot network with $c=$ True. 
		The yellow box represents $1\times 1$ convolution and the green box represents dilated convolution module, which consists of dilated convolution followed by $1\times 1$ convolution and residual skip connection.
		Note that RS samples the same pixel indices when calculating downsampled invariance loss with $RS_2(f(\boldsymbol{x}))$ and $sg(f_M(RS_2(\boldsymbol{x}))$.}
	\label{fig:overview}
\end{figure*}
\subsection{Overview}
\label{subsec:Overview}
We introduce a novel self-supervised learning framework to denoise real-world RGB images, which is illustrated in \cref{fig:overview}.
We propose a novel loss function that can be directly optimized on the input image without loss of information. It consists of self-supervised loss and downsampled invariance loss that controls the extent of the blindness. Our main idea of the downsampled invariance loss is to make a blind-spot network serve as regularization of the same network while preserving network parameters. To this end, we propose a conditional blind-spot network (denoted C-BSN in the figure) to selectively mask the center pixel in the receptive field. In addition, we introduce Random Subsampler (RS) to decorrelate noise spatially. The pixel-shuffle downsampling (PD)~\cite{zhou2020awgn} also loosens the spatial correlation of the noise, but it generates severe checkerboard artifacts. On the contrary, since our RS draws a pixel randomly from each grid, it does not produce such artifacts. We denote the noisy input image as $\boldsymbol{x}$ and the corresponding clean image as $\boldsymbol{y}$. For brevity, the channel dimension is omitted, and spatial dimensions are vectorized, \ie, $\boldsymbol{x},\boldsymbol{y}\in\mathbb{R}^m$.

\subsection{Revisiting Noise2Same}
Under the assumption that noise is zero mean and pixel-wise independent, Baston \etal~\cite{batson2019noise2self} proved that self-supervised loss is equivalent to supervised loss if the network is $\mathcal{J}$-invariant.

\newtheorem{definition}{Definition}
\begin{definition}\cite{batson2019noise2self}
	Let $\mathcal{J}$ be a partition of the dimensions $\{1,...,m\}$ and let $J \subset \mathcal{J}$. A function $f: \mathbb{R}^m \rightarrow \mathbb{R}^m$ is $J$-invariant if $f(\boldsymbol{x})_J$ does not depend on the value of $\boldsymbol{x}_J$. It is $\mathcal{J}$-invariant if it is $J$-invariant for each $J \in \mathcal{J}$.
\end{definition}
\noindent Subscripted notation $\boldsymbol{x}_J$ is used for $\boldsymbol{x}$ restricted to $J$.
Noise2Same~\cite{xie2020noise2same} analyzed that strictly $\mathcal{J}$-invariant function is not optimal for the denoisers. Rather, it mitigates the $\mathcal{J}$-invariance constraints by minimizing the upper bound of supervised loss,
\begin{equation}
	\begin{aligned}
		\mathcal{L}_{N2Same} &= \E_{\boldsymbol{x}}||f(\boldsymbol{x})-\boldsymbol{x}||^2/m \\
		&\hspace{-2mm}+\lambda_{inv} \E_J(\E_{\boldsymbol{x}} ||f(\boldsymbol{x})_J -f(\boldsymbol{x}_{J^C})_J||^2/|J|)^{\frac{1}{2}},
		\label{eq:self}
	\end{aligned}
\end{equation}
where $\boldsymbol{x}$ is the normalized input image so that the mean of $\boldsymbol{x}$ is zero and the standard deviation equals one. The first term is the self-supervised loss, while the second term controls how $\mathcal{J}$-invariant $f$ should be.

\subsection{Downsampled Invariance Loss}
\label{subsec:invloss}
Noise2Same upper bound holds when $f(\boldsymbol{x}_{J^C})$ in \cref{eq:self} is not correlated with $\boldsymbol{x}_J$.
Although the pixel-wise independent noise such as AWGN satisfies the above constraint, real noise is correlated spatially, which makes it no longer applicable. Instead of randomly sampling the subset $J$, we sample the downsampled image to reduce the correlation, following previous research \cite{zhou2020awgn,lee2022ap}. Precisely, we propose modified version of ~\cref{eq:self} as follows:

\newtheorem{proposition}{Proposition}
\begin{proposition}\label{proposition}
	Let $\boldsymbol{x}$ be a normalized zero-mean noisy image conditioned on $\boldsymbol{y}$, $\E[\boldsymbol{x}|\boldsymbol{y}]=\boldsymbol{y}$. 
	Let $d$ be any downsampling operation and $d_s(\boldsymbol{x})$ be a set of downsampled pixels of $\boldsymbol{x}$ with a stride of $s$.
	Assume that downsampled subimage $d_s(\boldsymbol{x})$ has zero pixel-wise correlation and $f_M$ is a blind-spot network. Then, the following inequality holds.
	\begin{align}
		&\E_{\boldsymbol{x},\boldsymbol{y}} \norm{f(\boldsymbol{x})-\boldsymbol{y}}^2 + \norm{\boldsymbol{x}-\boldsymbol{y}}^2 \leq 
		\E_{\boldsymbol{x}} \norm{f(\boldsymbol{x})-\boldsymbol{x}}^2\nonumber &&\\
		&\hspace{6mm}+ 2\sqrt{ms^2} \mathop{\E}_{d_s(\boldsymbol{x})}[\E \,\norm{d_s(f(\boldsymbol{x})) - f_{M}(d_s(\boldsymbol{x}))}^2]^{\frac{1}{2}}.
		\label{eq:prop}
	\end{align}
\end{proposition}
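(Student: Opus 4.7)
The plan is to expand the self-supervised risk so that the supervised risk appears explicitly, and then absorb the leftover cross term using the downsampled blind-spot network $f_M$. Writing $\boldsymbol{x}=\boldsymbol{y}+(\boldsymbol{x}-\boldsymbol{y})$ and taking expectations yields
\[
\E\norm{f(\boldsymbol{x})-\boldsymbol{x}}^2 = \E\norm{f(\boldsymbol{x})-\boldsymbol{y}}^2 + \E\norm{\boldsymbol{x}-\boldsymbol{y}}^2 - 2\E\langle f(\boldsymbol{x})-\boldsymbol{y},\,\boldsymbol{x}-\boldsymbol{y}\rangle,
\]
so proving \eqref{eq:prop} reduces to upper-bounding the cross term $2\E\langle f(\boldsymbol{x})-\boldsymbol{y},\boldsymbol{x}-\boldsymbol{y}\rangle$ by $2\sqrt{ms^2}\,\E_{d_s(\boldsymbol{x})}[\E\norm{d_s(f(\boldsymbol{x}))-f_M(d_s(\boldsymbol{x}))}^2]^{1/2}$. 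This mirrors the Noise2Same template, except that here the cross term cannot be killed directly, because on full resolution the real sRGB noise in $\boldsymbol{x}-\boldsymbol{y}$ is spatially correlated and so uncorrelated-noise tricks at pixel level fail.

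To sidestep the correlation, I would push the inner product through the random subsampler. Since RS picks one pixel uniformly from each $s\times s$ block, the identity $\langle a,b\rangle = s^2\,\E_{d_s}\langle d_s(a),d_s(b)\rangle$ holds for all $a,b\in\mathbb{R}^m$, converting the cross term into
\[
s^2\,\E_{d_s,\boldsymbol{x},\boldsymbol{y}}\langle d_s(f(\boldsymbol{x}))-d_s(\boldsymbol{y}),\, d_s(\boldsymbol{x})-d_s(\boldsymbol{y})\rangle.
\]
Next I would insert $\pm f_M(d_s(\boldsymbol{x}))$ into the first slot and split it into a blind-spot piece $\langle f_M(d_s(\boldsymbol{x}))-d_s(\boldsymbol{y}),\,d_s(\boldsymbol{x})-d_s(\boldsymbol{y})\rangle$ and a residual piece $\langle d_s(f(\boldsymbol{x}))-f_M(d_s(\boldsymbol{x})),\,d_s(\boldsymbol{x})-d_s(\boldsymbol{y})\rangle$. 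The residual piece is exactly the quantity that will appear inside the square root on the right-hand side of \eqref{eq:prop}, so the argument reduces to killing the blind-spot piece and bookkeeping a Cauchy--Schwarz step.

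The main obstacle, and the step that consumes all four hypotheses, is showing the blind-spot piece has zero expectation. Conditioning on $\boldsymbol{y}$ and on the sampling pattern $d_s$, the assumed pixel-wise independence of $d_s(\boldsymbol{x})$ together with the blind-spot property of $f_M$ makes $f_M(d_s(\boldsymbol{x}))_i-d_s(\boldsymbol{y})_i$ independent of $d_s(\boldsymbol{x})_i-d_s(\boldsymbol{y})_i$ at each index $i$, while $\E[\boldsymbol{x}\mid\boldsymbol{y}]=\boldsymbol{y}$ forces the second factor to have conditional mean zero; the product therefore vanishes coordinatewise in expectation. For the residual piece I would apply Cauchy--Schwarz inside the $(\boldsymbol{x},\boldsymbol{y})$-expectation with $d_s$ held fixed, obtaining $\sqrt{\E\norm{d_s(f(\boldsymbol{x}))-f_M(d_s(\boldsymbol{x}))}^2}\cdot\sqrt{\E\norm{d_s(\boldsymbol{x})-d_s(\boldsymbol{y})}^2}$. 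Since $\boldsymbol{x}$ is normalized to unit noise variance and $d_s(\boldsymbol{x})$ has $m/s^2$ coordinates, the second factor is bounded by $\sqrt{m/s^2}$, so multiplying by the outer $s^2$ collapses to $2s^2\cdot\sqrt{m/s^2}=2\sqrt{ms^2}$, matching the stated prefactor. The only delicate point is ordering the $(d_s,\boldsymbol{x},\boldsymbol{y})$ expectations so that the conditional-independence argument and Cauchy--Schwarz are each applied at the right layer, which is precisely why the bound is written with $\E_{d_s(\boldsymbol{x})}$ sitting outside the square root.
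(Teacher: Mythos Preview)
Your proposal is correct and follows essentially the same route as the paper: expand the self-supervised loss, isolate the cross term, pass it through the downsampler via the $1/s^2$ sampling identity, subtract $f_M(d_s(\boldsymbol{x}))$ using the blind-spot plus decorrelation hypothesis to kill one piece, and bound the remaining piece by Cauchy--Schwarz together with the unit-variance normalization. The only cosmetic difference is that the paper carries out the last step coordinate-wise on conditional covariances and then aggregates with Jensen's inequality, whereas you apply Cauchy--Schwarz once at the vector level; both produce the same prefactor $2\sqrt{ms^2}$.
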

\noindent
\Cref{proposition} provides the upper bound of the supervised loss with the self-supervised loss and the regularization of the downsampled output with the blind output of the downsampled input. 
We prove in the supplementary material that $f(\boldsymbol{x}_{J^C}) $ in \cref{eq:self} can be replaced by $f_M(d_s(\boldsymbol{x}))$, which has no correlation with $d_s(\boldsymbol{x})$.
This simplifies the second term of \cref{eq:self} to our new downsampled invariance loss,
\begin{equation}
	\mathcal{L}_{inv} = \sqrt{\frac{s^2}{m}} \norm{d_s(f(\boldsymbol{x})) -sg(f_M(d_s(\boldsymbol{x})))}_2,
	\label{eq:self2}
\end{equation}
where $sg$ is a stop-gradient operation.
With \Cref{proposition}, we can optimize the denoising network by minimizing the right side of \cref{eq:prop}.
Details of the proof are in the supplementary material. 

\subsection{Conditional Blind-Spot Network}
\label{subsec:cbsn}
\Cref{eq:self2} requires the parameters of the network $f$ to be shared regardless of the blind-spot. In the case of Noise2Same~\cite{xie2020noise2same}, the network remains unchanged as blindness is caused by masking input pixels, not by the network structure. However, masking causes train-test discrepancy of inputs and harms training efficiency since loss can be back-propagated only through masked pixels. On the other hand, a network such as D-BSN~\cite{wu2020unpaired} excludes the center pixel by its architecture. It can be optimized through every single pixel, though the blindness cannot be removed. To control blindness with D-BSN architecture conditionally, the network structure should be changed while sharing the training parameters. To this end, we propose a conditional blind-spot network (C-BSN) to make a blind-spot without masking the input image.

In D-BSN, blindness is induced by masked convolutions, and dilated convolutions prevent masked pixel information from being mixed in. We switch the behavior of masked convolution by changing the mask of kernels according to the given condition $c$:
\begin{equation}
	F_o = (M \odot W)*F_i+b,
	\label{eq:condconv}
\end{equation}
\begin{equation}
	M = 
	\begin{cases}
		\mathbf{1}_{k\times k} - \boldsymbol{\delta}_{k\times k},  & \text{if } c=\text{True},\\
		\mathbf{1}_{k\times k},  & \text{otherwise,}
	\end{cases}
\end{equation}
where $W$ is convolutional filter, $b$ is a bias, and $F_i$ and $F_o$ are input features and output features, respectively.
$\boldsymbol{\delta}_{k\times k}$ is a $k \times k$ Dirac delta kernel, and $\mathbf{1}$ is the matrix of ones.
For simplicity, we omit the condition variable $c$ when $c=$ False and represent only blind-conditioned network as $f_M = f(\cdot;c=\mathrm{True})$.
We only use $f_M$ in the training phase, and all test images are inferred by non-blind network $f$ without the loss of information.

Applying conditional masked convolution can alter the output features' distribution because the kernel's center is set to zero when $c$ is False. However, $f$ should be trained differently from $f_M$ to utilize the masked pixel. In addition, the center of the kernel is trained independently of $f_M$, based on the modified feature distribution. Hence, we use the kernel and its mask without normalization between $c=$True and $c=$False.

\subsection{Random Subsampler}
\begin{figure}[]
	\centering
	\includegraphics[width=.95\linewidth]{./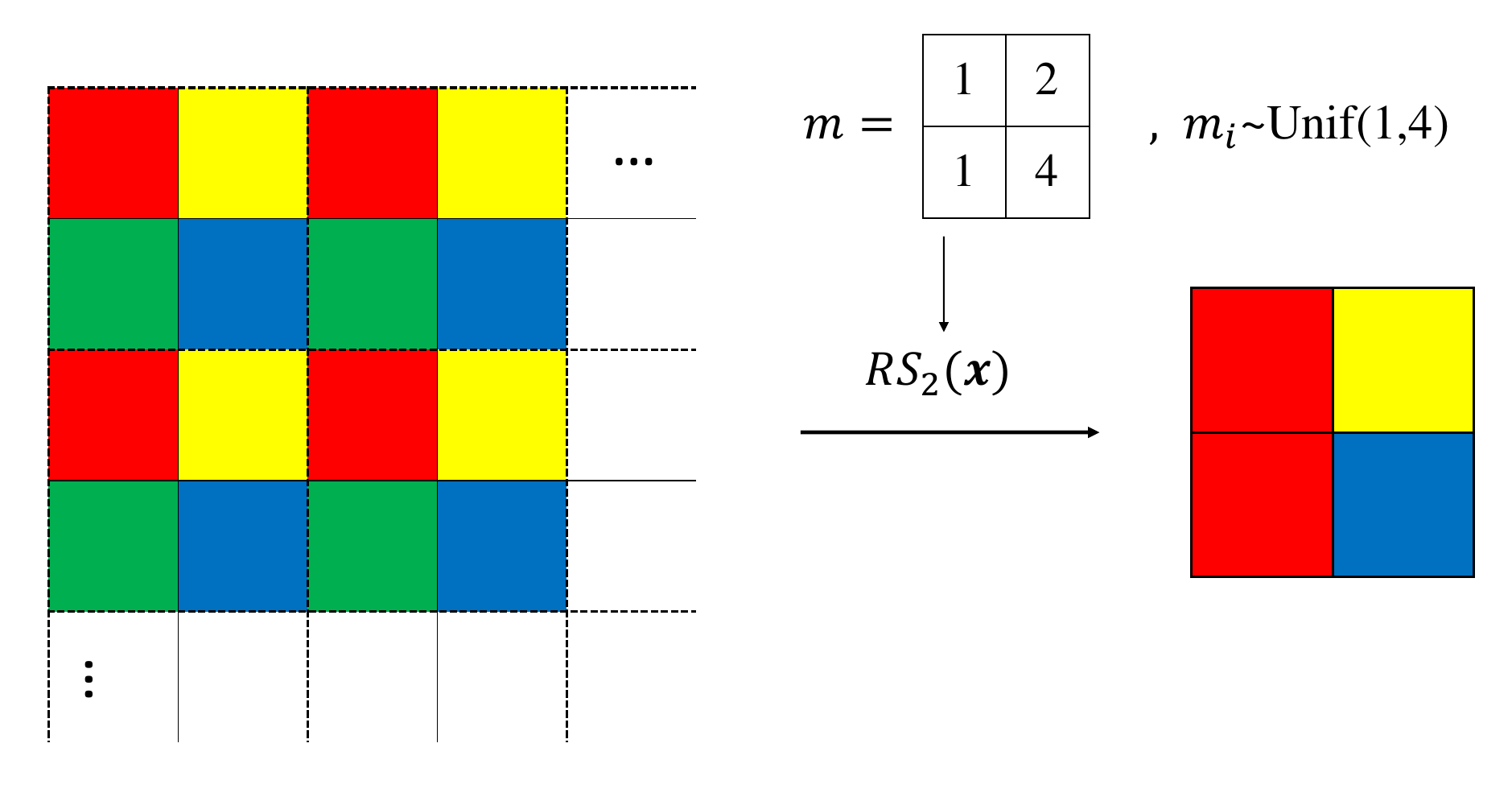}
	
	\caption{\textbf{Details of a random subsampler with a stride of two.}
		Each color represents a relative position within the cell.
		$m$ is a selection mask introduced for an explanation.
		The indices of $m$ that determine which pixel will be selected are randomly sampled from the uniform distribution.
	}
	\label{fig:downshuffle}
\end{figure}

\label{subsec:downsampling}
In \Cref{subsec:invloss}, we introduced a downsample operator to the invariance loss to extract a subset of the image with zero spatial correlation. This constraint is guaranteed in pixel-wise synthetic noises, while real noise does not comply.
In order to remove spatial pixel dependency, Zhou \etal~\cite{zhou2020awgn} and Lee \etal~\cite{lee2022ap} utilized pixel-shuffle downsampling (PD).
The PD is the inverse operation of the pixel-shuffle~\cite{shi2016real} and creates the mosaic of the subimages. However, directly applying PD in downsampled invariance loss is not trivial since the expectation of \cref{eq:prop} is calculated over subimages $d_s(\boldsymbol{x})$. 
Another approach to decorrelate the noise is a space2batch (S2B) operation, where pixel down-shuffled subimages are concatenated along batch dimension instead of channel dimension.
However, naively applying S2B induces severe visual artifacts in the results. When S2B images are taken as input, 
all the subimages are calculated independently, which results in a checkerboard pattern in the batch2space (B2S) upsampled outputs, giving false guidance to the $f(x)$.

To deal with this problem, we propose a random subsampler $RS_s(\cdot)$, a subsampling operator to avoid the checkerboard artifact.
\Cref{fig:downshuffle} shows the details of our random subsampler.
Taking stride of two as an example, input images are divided into $2\times 2$ grid cells. For each cell, a pixel is randomly drawn within the cell, making $s$ times downsampled image.
If the randomly downsampled pixel in the adjacent cell is also adjacent, the correlation may occur significantly. However, in this case, the average distance from the other peripheral pixels becomes large, and the expected average distance between subsampled pixels can still be approximated to $s$. Therefore, as with PD, the expected spatial correlation is weakened by the random subsampler.

\subsection{Total Loss function}\label{subsec:totalloss}
In this section, we provide the total loss function.
For simple notation, we use $\norm{\cdot}$ to represent the pixel-averaged $L_1$ norm.
We substitute mean squared errors to the $L_1$ norm in the self-supervised loss, as
\begin{equation}
	\mathcal{L}_{self} = \norm{f(x) - x}.
	\label{eq:loss_self}
\end{equation}
Also, we find it beneficial to replace the root mean square (RMS) of the downsampled invariance loss with the $L_1$ norm as well and to use a random subsampler as a downsampling operation,
\begin{equation}
	\mathcal{L}_{invRS} = \norm{RS_2(f(\boldsymbol{x}))-sg(f_M(RS_2(\boldsymbol{x}))}.
	\label{eq:loss_di}
\end{equation}
From the \Cref{proposition} in \Cref{subsec:invloss}, we minimize the upper bound of supervised loss function,
\begin{equation}
	\mathcal{L}_{CBSN} = \mathcal{L}_{self} + \lambda_{inv}\cdot\mathcal{L}_{invRS}
	\label{eq:loss_CBSN}
\end{equation}
where $\lambda_{inv}$ is a hyperparameter to control the contribution of the downsampled invariance loss. We set the stride of RS as 2 in order to reflect more spatial information.

In addition, we introduce a self-supervised loss of the blind conditioned network, $\mathcal{L}_{blind}$, to stabilize the training as in \cite{wang2022blind2unblind}, where
\begin{equation}
	\mathcal{L}_{blind} = \norm{B2S_5(f_M(S2B_5(\boldsymbol{x}))) - \boldsymbol{x}}.
	\label{eq:loss_blind}
\end{equation}
While downsampled invariance loss utilizes the stride of two, the stride in \cref{eq:loss_blind} is five since the ideal BSN should be trained with as little correlation as possible.
Without blind self-supervised loss, $f_M(\boldsymbol{x})$ is random in the early stage of training, giving wrong guidance to the $f(\boldsymbol{x})$. 
Thus, we augment $\mathcal{L}_{CBSN}$ with the blind self-supervised loss to facilitate the transition from $f_M$ to $f$.
Additionally, we adopt warm-up scheduling to $\mathcal{L}_{CBSN}$. Scheduling parameter $\lambda_{sch}$ is multiplied to $\mathcal{L}_{CBSN}$, gradually increasing the impact of $\mathcal{L}_{CBSN}$. With all these in consideration, the total objective function is defined as
\begin{equation}
	\mathcal{L}_{total} = \mathcal{L}_{blind} + \lambda_{sch}\cdot\mathcal{L}_{CBSN}.
	\label{eq:loss_total}
\end{equation}


\section{Experimental results}
\label{sec:result}
\begin{table*}[]
	\caption{\textbf{Quantitative comparison on SIDD and DND benchmarks.}
		PSNR and SSIM are from the official SIDD and DND websites.
		We use $\dagger$ notation to indicate that the network is trained on the test set directly. $*$ denotes that the method uses a self-ensemble strategy.
		The highest PSNR and SSIM of self-supervised algorithms are highlighted in \textbf{bold}.
	}
	\centering	
	
	\begin{tabular}{>{\centering}p{0.15\textwidth}|>{\centering}p{0.25\textwidth}|>{\centering}p{0.10\textwidth}			>{\centering}p{0.10\textwidth}>{\centering}p{0.10\textwidth}>{\centering\arraybackslash}p{0.10\textwidth}}
		\toprule
		\multirow{2}{*}{Supervision}&\multirow{2}{*}{Method}  & \multicolumn{2}{c}{SIDD}  & \multicolumn{2}{c}{DND}  \\
		&& PSNR(dB)  & SSIM     & PSNR(dB)  & SSIM \\ \midrule
		\multirow{2}{*}{Model-based} &BM3D~\cite{dabov2007image}&      25.65    &  0.685        &    34.51       &     0.851     \\
		&WNNM~\cite{gu2014weighted}&      25.78    &  0.809        &    34.67      &     0.865     \\ \midrule
		\multirow{7}{*}{Supervised}
		&DNCNN~\cite{zhang2017beyond}&      35.13     &  0.896        &    37.89       &     0.932     \\
		&CBDNet~\cite{guo2019toward}&      33.28    &  0.868        &    38.05       &     0.942     \\
		&RIDNet~\cite{anwar2019real}&      38.70     &  0.950        &    39.24       &     0.952     \\
		&AINDNet (R)*~\cite{kim2020transfer}&      38.84     &  0.951        &    39.34       &     0.952     \\
		&VDN~\cite{yue2019variational}&      39.26    &  0.955        &    39.38       &     0.952     \\
		&MIRNet~\cite{zamir2020learning}&     39.72     &    0.959      &    39.88        &     0.956     \\
		&MAXIM-3S~\cite{tu2022maxim}&     39.96     &    0.960      &    39.84        &     0.957     \\ \midrule
		\multirow{2}{*}{Generation-based}	& GCBD~\cite{xu2020noisy} &      -     &      -    &      35.58    &   0.922       \\
		&C2N*~\cite{jang2021c2n} + DIDN ~\cite{yu2019deep}&      35.35     &  0.937        &    36.38       &     0.887     \\ \midrule
		\multirow{8}{*}{Self-supervised}	& NAC~\cite{xu2020noisy} &      -     &      -    &      36.20     &   0.925       \\
		&R2R~\cite{pang2021recorrupted}&      34.78     &  0.898        &    -       &     -     \\
		&CVF-SID(T)~\cite{neshatavar2022cvf}&      34.43     &  0.912   &   36.31 &  0.923 \\
		&CVF-SID($\mathrm{S}^2$)$^\dagger$~\cite{neshatavar2022cvf}&      34.71     &  0.917   &   36.50 &  0.924 \\
		&AP-BSN~\cite{lee2022ap}	&     34.90      &    0.900      &     37.46      &  0.924       \\
		&AP-BSN + $\mathrm{R}^3$~\cite{lee2022ap}&    35.97      &    0.925      &     38.09      &  0.937  \\ 
		&C-BSN	&   36.82      &    \textbf{0.934}      &     38.45      &  0.939       \\     
		&C-BSN$^\dagger$	&    \textbf{36.84}      &    0.933      &     \textbf{38.60}      &  \textbf{0.941}\\
		\bottomrule  
	\end{tabular}
	
	\label{table:Result}
\end{table*}

\subsection{Implementation Details}
\label{subsec:implement}
We train and test our method on real-world sRGB camera noise. Our model is trained in two settings; one is trained with an external dataset, and the other is trained with a test set directly. For the external training set, we use the SIDD medium set~\cite{abdelhamed2018high}, which contains 320 pairs of aligned real noisy-clean images captured by five smartphone cameras.
We only use the noisy images as training samples and discard all clean images.
In addition, as C-BSN requires only noisy images to be trained, we train C-BSN$^\dagger$ solely on test set images.
We test the proposed algorithm in DND~\cite{plotz2017benchmarking} and SIDD~\cite{abdelhamed2018high} benchmark. DND consists of 50 high-resolution noisy images from four different cameras. Note that both benchmarks evaluate PSNR and SSIM online and do not provide ground truth images.

We crop $240 \times 240$ patches from training images and use the mini-batch size of 4.
We randomly rotate $90^{\circ}$ and flip for data augmentation for each image patch.
Input images are normalized so that the mean and the standard deviation are 0 and 1, respectively. The standard deviation is calculated as $\max(\mathrm{std},\frac{1}{\sqrt{m}})$ to avoid division by zero.

We follow the AP-BSN structure~\cite{lee2022ap} with modified masked convolution in order to compare the effectiveness of loss functions only. We set $\lambda_{inv}$ to 2 as derived in \Cref{proposition} and employ a warm-up strategy for $\lambda_{sch}$ that linearly increases from 0 to 1 for the first 200,000 iterations. 
We use Adam~\cite{kingma2015adam} optimizer with the initial learning rate 1e-4.
C-BSN is optimized for 400,000 iterations, and the learning rate is halved every 100,000 iterations, capped at 2e-5. Note that C-BSN requires a single inference of input image, and the downsampling operation is not performed in test time.

\begin{figure*}[t!]
	\centering
	\captionsetup[subfigure]{justification=centering}
	\subfloat[Noisy image\\
			26.90 / 0.7527\label{fig:noisy_dnd}]
	{\includegraphics[width=.19\linewidth]{./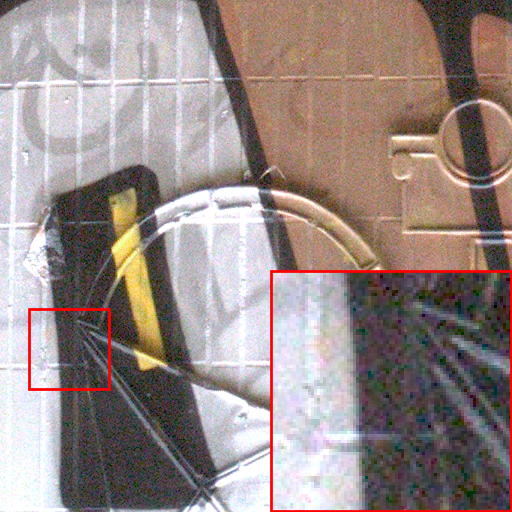}}
	\hfill
	\subfloat[CVF-SID(S$^2$)~\cite{neshatavar2022cvf}\\
		28.74/0.8737\label{fig:cvf_dnd}]
	{\includegraphics[width=.19\linewidth]{./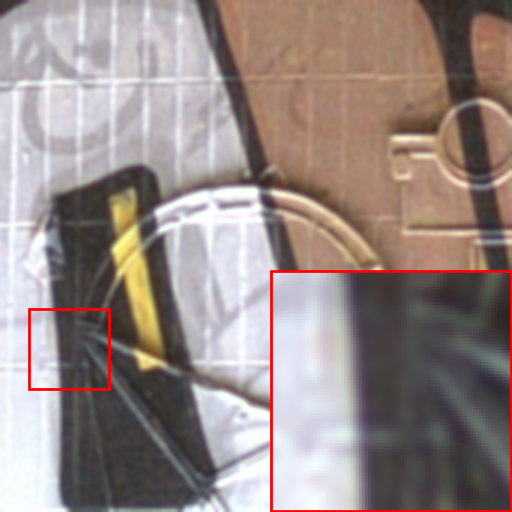}}
	\hfill
	\subfloat[AP-BSN~\cite{lee2022ap}\\
		30.79 / 0.8901  \label{fig:ap-bsn_dnd}]
	{\includegraphics[width=.19\linewidth]{./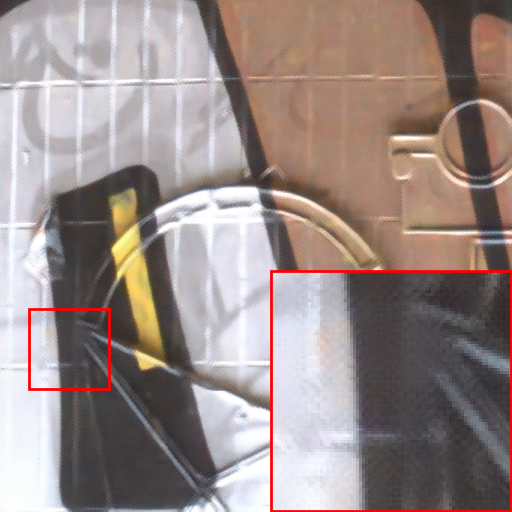}}
	\hfill
	\subfloat[AP-BSN+R$^3$~\cite{lee2022ap}\\
		32.04 / 0.9164\label{fig:ap-bsnR3_dnd}]
	{\includegraphics[width=.19\linewidth]{./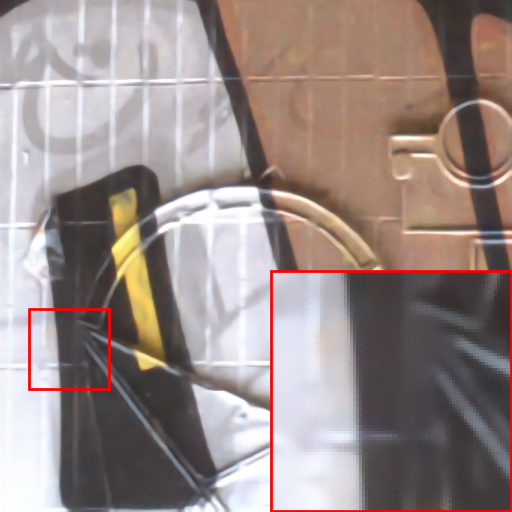}}
	\hfill
	\subfloat[C-BSN$^\dagger$\\
		\textbf{32.63 / 9180} \label{fig: ours_dnd}]
	{\includegraphics[width=.19\linewidth]{./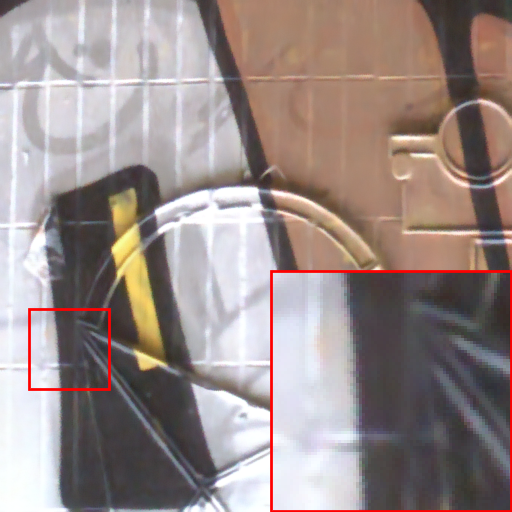}}
	\caption{\textbf{Visual comparison on DND benchmark.} PSNR and SSIM of each image are reported below.}
	\label{fig:qual}
\end{figure*}

\begin{figure*}[t]
	\captionsetup[subfigure]{justification=centering}
	\subfloat[Noisy image
		\label{fig:noisy_sidd}]
	{\includegraphics[width=.19\linewidth]{./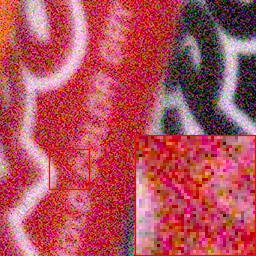}}
	\hfill
	\subfloat[CVF-SID(S$^2$)~\cite{neshatavar2022cvf}
		\label{fig:cvf_sidd}]
	{\includegraphics[width=.19\linewidth]{./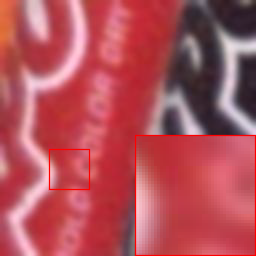}}
	\hfill
	\subfloat[AP-BSN~\cite{lee2022ap}
		\label{fig:ap-bsn_sidd}]
	{\includegraphics[width=.19\linewidth]{./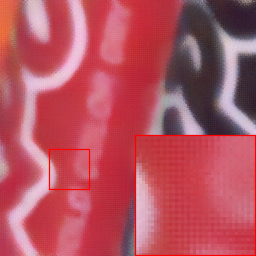}}
	\hfill
	\subfloat[AP-BSN+R$^3$~\cite{lee2022ap}
		\label{fig:ap-bsR$3_sidd}]
	{\includegraphics[width=.19\linewidth]{./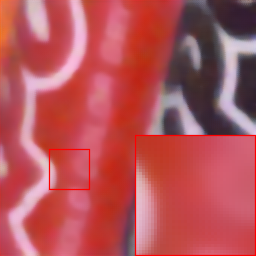}}
	\hfill
	\subfloat[C-BSN$^\dagger$\label{fig:ours_sidd}]
	{\includegraphics[width=.19\linewidth]{./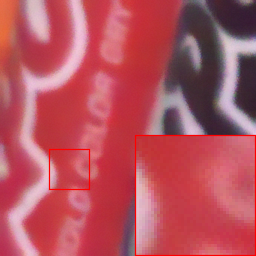}}
	\caption{\textbf{Visual comparison on SIDD benchmark.} In SIDD benchmark, PSNR and SSIM of the image is not available.}
	\label{fig:qual_sidd}
\end{figure*}

\subsection{Comparison with state-of-the-art algorithms}\label{subsec:comparison}
We compare our C-BSN against supervised, generation-based, and self-supervised methods.
The supervised models are trained on real noisy-clean pairs of SIDD, and the generation-based models simulate realistic noise and train denoiser with generated pairs.
The self-supervised models use only noisy images to train the networks.
We only report the self-supervised models that aim to remove the real noise.   
\Cref{table:Result} compares PSNR and SSIM on SIDD and DND benchmarks.
The proposed C-BSN outperforms other self-supervised methods by large margins and even some supervised networks.
C-BSN$^\dagger$ trained with the test dataset shows slightly higher PSNR than the C-BSN trained on the external dataset.
It demonstrates that the training with the same noise distribution of the test set benefits the performance of the network.
Specifically, C-BSN$^\dagger$ outperforms CVF-SID (S$^2$) and AP-BSN+R$^3$ by 2.13dB and 0.51dB, which shows the effectiveness of our framework.
The proposed downsampled invariance loss and C-BSN structure enjoy the use of blind-spot information and single inference with full image resolution. 

Figs.~\ref{fig:intro}, \ref{fig:qual}, and \ref{fig:qual_sidd} illustrate the qualitative comparisons of self-supervised methods on the DND and SIDD benchmarks.
We can see that the outputs of CVF-SID remain noisy and show stains in the flat region.
AP-BSN suffers from checkerboard artifact and AP-BSN+R$^3$ over-blur image details.
On the contrary, it can be seen that our C-BSN successfully reduces the noise and preserves the structure of the images.

Note that AP-BSN+R$^3$~\cite{lee2022ap} and CVF-SID(S$^2$)~\cite{neshatavar2022cvf} exploit a refinement technique that requires multiple runs of the network. 
AP-BSN+R$^3$ randomly replaces denoised pixels with noisy ones and averages the denoised results of randomly replaced inputs.
CVF-SID(S$^2$) trains the second model with the denoised images as a new training set and double-denoise with two successive models.
On the other hand, we do not need any post-processing and achieve state-of-the-art results with a single inference.

\subsection{Ablation Study}\label{subsec:abl}
In this section, we conduct ablation studies on the loss function, downsampler, and blind loss to show the effectiveness of the proposed method.
To reduce the cost of training, we train the networks with the patch size of $120 \times 120$ and evaluate them on the SIDD validation set.

\begin{table}
	\centering
	\caption{\textbf{Ablation on loss function.} Details of the settings of the experiment are reported in \Cref{subsec:abl}.}
	
	\begin{tabular}{>{\centering}p{0.20\textwidth}|>{\centering}p{0.08\textwidth}>{\centering\arraybackslash}p{0.08\textwidth}}
		\hline
		Loss function &PSNR(dB)&SSIM\\
		\hline
		$\mathcal{L}_{N2Same}$ &25.58&0.807\\
		$\mathcal{L}_{total}$ with blind-spot&35.86&0.931\\
		$\mathcal{L}_{inv}$ with RMS &35.63&0.920\\
		$\mathcal{L}_{total}$ & \textbf{36.22}& \textbf{0.935}\\
		\hline
	\end{tabular}
	
	\label{tab:abl_loss}
\end{table}

\noindent\textbf{Ablation on loss function.}
We analyze the different loss functions to evaluate the effectiveness of our downsampled invariance loss and conditional blind-spot network.
\Cref{tab:abl_loss} reports the PSNR on the SIDD validation dataset with four different loss functions.
For $\mathcal{L}_{N2Same}$, we set all condition $c$ to False so that the network is not blind, and the blindness is caused by masking input as in \cite{xie2020noise2same}.
The network trained with $\mathcal{L}_{N2Same}$ fails to converge, showing that a downsampling operation is necessary to reduce the spatial correlation of real noisy input.
$\mathcal{L}_{total}$ with blind-spot is trained with original D-BSN, which is not able to remove the blind-spot.
We set all $c$ to True to make the network blind while keeping the other loss functions the same.
Note that it differs from AP-BSN or D-BSN since the network is trained by $\mathcal{L}_{total}$ on full image resolution.
We can see that PSNR drops largely without C-BSN structure, which validates the importance of the center pixel information.
Lastly, $\mathcal{L}_{inv}$ with $L_2$ is trained by $\mathcal{L}_{invRS}$ with the RMS as in Noise2Same.
The performance decreases when the $L_1$ norm of $\mathcal{L}_{invRS}$ is replaced by RMS, which shows L1 norm can enhance the quality of output significantly.

\begin{figure}[t]
	\centering
	\subfloat[$PD_5$\label{fig:abl_PD5}]
	{\includegraphics[width=.47\linewidth]{./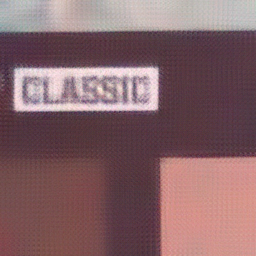}}
	\hspace{4pt}
	\subfloat[$PD_2$\label{fig:abl_PD2}]
	{\includegraphics[width=.47\linewidth]{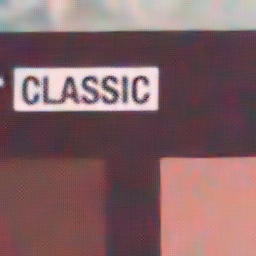}}
	\\
	\subfloat[$S2B_5$\label{fig:abl_S2B5}]
	{\includegraphics[width=.47\linewidth]{./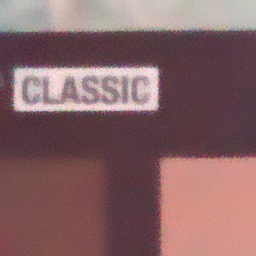}}
	\hspace{4pt}
	\subfloat[$S2B_2$\label{fig:abl_S2B2}]
	{\includegraphics[width=.47\linewidth]{./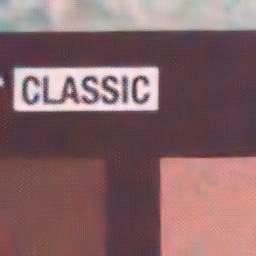}}
	\\
	\subfloat[$RS_5$\label{fig:abl_RS5}]
	{\includegraphics[width=.47\linewidth]{./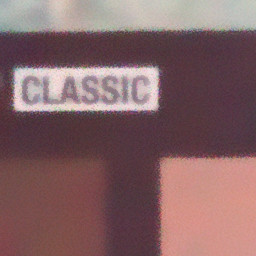}}
	\hspace{4pt}
	\subfloat[$RS_2$\label{fig:abl_RS2}]
	{\includegraphics[width=.47\linewidth]{./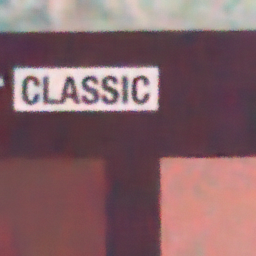}}	
	
	\caption{\textbf{Qualitative comparison of different downsampling operations in downsampled invariance loss on SIDD validation.}}
	\label{fig:down_abl}
\end{figure}

\noindent\textbf{Ablation on downsampler.}
We evaluate the networks trained with different downsamplers in the downsampled invariance loss to validate the effectiveness of our random subsampler with a stride of two. 
We test three downsamplers, PD, S2B, and RS, with strides of 2 and 5.
Each stride represents the small stride for more information and the large stride for spatial independence of real noise.
\Cref{tab:abl_ds} and \Cref{fig:down_abl} show the effectiveness of each downsample operation quantitatively and qualitatively.
As argued in \Cref{subsec:downsampling}, the networks trained with PD underperform S2B and produce visual artifacts of size $s \times s$. 
The models trained with the stride of 5 produce blurry images and cannot remove noise around the edges.
It demonstrates that it is advantageous to keep spatial information of the input with a small stride in the downsampled invariance loss.
Regardless of the stride, S2B outperforms PD, and RS outperforms S2B.
PD and S2B with a stride of two can reduce spatially correlated noise, but it also produces severe checkerboard artifacts.
On the other hand, the proposed $RS_2$ achieves the highest PSNR and visually pleasing result without artifacts, outperforming $PD_2$ and $S2B_2$ by 0.90dB and 0.20dB, respectively.

\noindent\textbf{Ablation on the blind loss.}
We investigate the effectiveness of the blind loss, $\mathcal{L}_{blind}$.
Though $\mathcal{L}_{CBSN}$ is an upper bound of the supervised loss, the training is unstable without $\mathcal{L}_{blind}$.
We set the hyperparameter $\lambda_{sch}$ to different conditions as in \Cref{tab:abl_blind}.
When $\lambda_{sch}=\infty$, we do not use the blind loss and train C-BSN with $\mathcal{L}_{CBSN}$ only.
In this case, the network fails to learn denoising and outputs zeros, resulting in a flat image of the input mean.
With $\lambda_{sch}=0$, the loss function is $\mathcal{L}_{blind}$ as AP-BSN~\cite{lee2022ap}. However, processing AP-BSN with the original size input without a blind-spot produces severe artifacts and poor image quality.
It can be seen that $\lambda_{sch}=1$ shows suboptimal PSNR to warm-up, yet it sometimes falls to the same local optima as $\lambda_{sch}=\infty$.
The suggested warm-up scheduling brings about 0.57dB PSNR improvement and stabilizes the training procedure.

\section{Conclusion}
We have presented a novel self-supervised image denoising framework C-BSN for real camera noise reduction. We have derived the downsampled invariance loss, which is the upper bound of the supervised loss and enables the training without a blind-spot.
The C-BSN structure conditionally controls blind-spot, and then the random subsampler decorrelates noise without introducing visual artifacts.
Without using post-processing or refinement, our C-BSN outperforms recent self-supervised denoisers.
\begin{table}[t]
	\centering
	\caption{\textbf{Ablation on the downsampler of downsampled invariance loss.}}
	\begin{tabular}{>{\centering}p{0.1\textwidth}|>{\centering}p{0.05\textwidth}|>{\centering}p{0.1\textwidth}>{\centering\arraybackslash}p{0.1\textwidth}}
		\hline
		downsampler & stride & PSNR(dB) & SSIM \\ \hline
		\multirow{2}{*}{$PD$} &5& 34.71 & 0.905 \\
		&2& 35.32 & 0.914 \\ \hline
		\multirow{2}{*}{$S2B$} &5& 35.62 & 0.924 \\
		&2& 36.02 & 0.922 \\ \hline
		\multirow{2}{*}{$RS$} &5& 35.24 &  0.922 \\
		&2& \textbf{36.22} & \textbf{0.935} \\
		\hline
	\end{tabular}
	\label{tab:abl_ds}
\end{table}

\begin{table}[t]
	\centering
	\caption{\textbf{Ablation on the blind loss.}}
	\begin{tabular}{>{\centering}p{0.15\textwidth}|>{\centering}p{0.1\textwidth}>{\centering\arraybackslash}p{0.1\textwidth}}
		\hline
		$\lambda_{sch}$ & PSNR(dB) & SSIM \\ \hline
		$\infty$ & 25.92 & 0.810 \\
		0 & 29.59& 0.757\\
		1 & 35.65 & 0.926 \\
		warm-up & \textbf{36.22} & \textbf{0.935} \\
		\hline
	\end{tabular}
	\label{tab:abl_blind}
\end{table}

\section*{Appendix}
\renewcommand{\thepage}{S\arabic{page}}
\renewcommand{\thesection}{S\arabic{section}}
\renewcommand{\thetable}{S\arabic{table}}
\renewcommand{\thefigure}{S\arabic{figure}}
\renewcommand{\theequation}{s\arabic{equation}}

\setcounter{table}{0}
\setcounter{figure}{0}
\setcounter{equation}{0}
\setcounter{section}{0}
\setcounter{proposition}{0}

\section{Detailed Proof of Downsampled Invariance Loss}
\label{supp:proof}
\begin{proposition}\label{supp:proposition}
	Let $\boldsymbol{x}$ be a normalized zero-mean noisy image conditioned on $\boldsymbol{y}$, $\E[\boldsymbol{x}|\boldsymbol{y}]=\boldsymbol{y}$. 
	Let $d$ be any downsampling operation and $d_s(\boldsymbol{x})$ be a set of downsampled pixels of $\boldsymbol{x}$ with a stride of $s$.
	Assume that downsampled subimage $d_s(\boldsymbol{x})$ has zero pixel-wise correlation and $f_M$ is a blind-spot network. Then, the following inequality holds.
	\begin{align}
		&\E_{\boldsymbol{x},\boldsymbol{y}} \norm{f(\boldsymbol{x})-\boldsymbol{y}}^2 + \norm{\boldsymbol{x}-\boldsymbol{y}}^2 \leq 
		\E_{\boldsymbol{x}} \norm{f(\boldsymbol{x})-\boldsymbol{x}}^2\nonumber &&\\
		&\hspace{6mm}+ 2\sqrt{ms^2} \mathop{\E}_{d_s(\boldsymbol{x})}[\E \,\norm{d_s(f(\boldsymbol{x})) - f_{M}(d_s(\boldsymbol{x}))}^2]^{\frac{1}{2}}.
		\label{supp:eq:prop}
	\end{align}
\end{proposition}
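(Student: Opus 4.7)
The plan is to mirror the Noise2Same upper-bound argument, replacing the abstract $\mathcal{J}$-invariant surrogate $f(\boldsymbol{x}_{J^C})_J$ from \cref{eq:self} by the blind-spot output $f_M(d_s(\boldsymbol{x}))$ evaluated on a randomly subsampled image. I would start from the algebraic identity
\begin{equation*}
\norm{f(\boldsymbol{x})-\boldsymbol{x}}^2 = \norm{f(\boldsymbol{x})-\boldsymbol{y}}^2 - 2(f(\boldsymbol{x})-\boldsymbol{y})^\top(\boldsymbol{x}-\boldsymbol{y}) + \norm{\boldsymbol{x}-\boldsymbol{y}}^2,
\end{equation*}
and, after taking expectations and rearranging, obtain
\begin{equation*}
\E\norm{f(\boldsymbol{x})-\boldsymbol{y}}^2 + \E\norm{\boldsymbol{x}-\boldsymbol{y}}^2 = \E\norm{f(\boldsymbol{x})-\boldsymbol{x}}^2 + 2\,\E[(f(\boldsymbol{x})-\boldsymbol{y})^\top(\boldsymbol{x}-\boldsymbol{y})].
\end{equation*}
The hypothesis $\E[\boldsymbol{x}-\boldsymbol{y}\mid\boldsymbol{y}]=0$ makes the $\boldsymbol{y}^\top(\boldsymbol{x}-\boldsymbol{y})$ contribution vanish, so the whole task reduces to upper-bounding the scalar $2\,\E[f(\boldsymbol{x})^\top(\boldsymbol{x}-\boldsymbol{y})]$ by the stated downsampled invariance quantity.

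The second step introduces the blind surrogate. For any pixel $i$ selected by the random subsampler, let $\pi(i)$ denote its index inside $d_s(\boldsymbol{x})$. Because $f_M$ is a blind-spot network, $f_M(d_s(\boldsymbol{x}))_{\pi(i)}$ depends only on the other coordinates of $d_s(\boldsymbol{x})$; combined with the hypothesis that the noise restricted to that grid is pixel-wise independent and zero mean, this yields $\E[f_M(d_s(\boldsymbol{x}))_{\pi(i)}(x_i-y_i)]=0$. Hence for any $i\in d_s$,
\begin{equation*}
\E[f(\boldsymbol{x})_i(x_i-y_i)] = \E\bigl[\bigl(f(\boldsymbol{x})_i - f_M(d_s(\boldsymbol{x}))_{\pi(i)}\bigr)(x_i-y_i)\bigr].
\end{equation*}
A uniform random subsampler marginalizes each coordinate with probability $1/s^2$, so the full-resolution inner product rewrites as a $d_s$-average,
\begin{equation*}
\E[f(\boldsymbol{x})^\top(\boldsymbol{x}-\boldsymbol{y})] = s^2\,\E_{d_s}\sum_{i\in d_s}\E\bigl[\bigl(f(\boldsymbol{x})_i - f_M(d_s(\boldsymbol{x}))_{\pi(i)}\bigr)(x_i-y_i)\bigr].
\end{equation*}

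The final step is Cauchy--Schwarz applied inside the subsampler expectation: first in the index $i$ to turn the inner sum into a product of $\ell_2$ norms over $d_s$, then across the joint expectation to pass to root-mean-squares. This gives
\begin{equation*}
\sum_{i\in d_s}\E\bigl[\,\cdots\,\bigr] \le \sqrt{\E\norm{d_s(f(\boldsymbol{x}))-f_M(d_s(\boldsymbol{x}))}^2}\,\sqrt{\E\norm{d_s(\boldsymbol{x}-\boldsymbol{y})}^2}.
\end{equation*}
The unit-variance normalization of $\boldsymbol{x}$ bounds the second factor by $\sqrt{m/s^2}$, so the $s^2$ prefactor collapses to $s^2\sqrt{m/s^2}=\sqrt{m s^2}$, and multiplying by $2$ reproduces exactly the downsampled invariance term in \cref{eq:prop}.

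The step I expect to be most delicate is the order of operations around the subsampler average. Cauchy--Schwarz must be applied \emph{inside} $\E_{d_s}$ so that the square root is taken before averaging over random patterns; pulling the square root out first via Jensen's inequality would point in the wrong direction and the bound would be lost. A second subtlety is the vanishing of $\E[f_M(d_s(\boldsymbol{x}))_{\pi(i)}(x_i-y_i)]$: it relies on the blind-spot property at the subsampled resolution combined with independence of the noise restricted to that grid, and the zero-correlation hypothesis on $d_s(\boldsymbol{x})$ is precisely what makes this rigorous, explaining why the downsampling step must appear in the loss at all.
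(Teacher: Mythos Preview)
Your proposal is correct and follows essentially the same route as the paper's proof: expand $\norm{f(\boldsymbol{x})-\boldsymbol{x}}^2$, reduce to bounding the cross term $\E[(f(\boldsymbol{x})-\boldsymbol{y})^\top(\boldsymbol{x}-\boldsymbol{y})]$, subtract the blind-spot surrogate $f_M(d_s(\boldsymbol{x}))$ (whose covariance with $x_j-y_j$ vanishes by blind-spot plus the zero-correlation assumption on $d_s$), insert the $1/s^2$ sampling weight, and finish with Cauchy--Schwarz together with the unit-variance normalization. The only cosmetic difference is that the paper works coordinatewise via $\mathrm{Cov}(f(\boldsymbol{x})_j,x_j\mid\boldsymbol{y})$ and uses a Jensen step to pull the square root outside the pixel sum, whereas you apply Cauchy--Schwarz twice (once in the pixel index, once across the expectation) to the vector inner product directly; both yield the same bound, and your handling of the $\E_{d_s}$/square-root ordering is exactly the subtlety the paper relies on as well.
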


\begin{proof}
	We follow similar steps with the supplementary material of \cite{xie2020noise2same}. 
	Self-supervised loss can be decomposed as 
	\begin{align}
		\E_{\boldsymbol{x}} ||f(\boldsymbol{x})-\boldsymbol{x}||^2  = &\E_{\boldsymbol{x},\boldsymbol{y}} ||f(\boldsymbol{x})-\boldsymbol{y}||^2 + ||\boldsymbol{x}-\boldsymbol{y}||^2\nonumber \\
		&-2 \langle f(\boldsymbol{x})-\boldsymbol{y},\boldsymbol{x}-\boldsymbol{y} \rangle.
	\end{align}
	Then, \Cref{supp:proposition} is equivalent to that the third term $\langle f(\boldsymbol{x})-\boldsymbol{y},\boldsymbol{x}-\boldsymbol{y} \rangle$ is upper-bounded by the rightmost term in \cref{supp:eq:prop}.
	$\langle f(\boldsymbol{x})-\boldsymbol{y},\boldsymbol{x}-\boldsymbol{y} \rangle$ can be formulated as 
	\begin{align}
		&\E_{\boldsymbol{x},\boldsymbol{y}} \langle f(\boldsymbol{x})-\boldsymbol{y},\boldsymbol{x}-\boldsymbol{y} \rangle \\
		&= \E_{\boldsymbol{y}} \E_{\boldsymbol{x}|\boldsymbol{y}} \sum_{j} (f(\boldsymbol{x})_j-y_j)(x_j-y_j)\\
		&= \sum_{j} \E_{\boldsymbol{y}} [\E_{\boldsymbol{x}|\boldsymbol{y}}(f(\boldsymbol{x})_j-y_j)(x_j-y_j) \nonumber\\
		&-\E_{\boldsymbol{x}|\boldsymbol{y}}(f(\boldsymbol{x})_j-y_j)\E_{\boldsymbol{x}|\boldsymbol{y}}(x_j-y_j)]\label{eq:1_3}\\ 
		&= \sum_{j} \E_{\boldsymbol{y}} [\mathrm{Cov}(f(\boldsymbol{x})_j-y_j,x_j-y_j|\boldsymbol{y})]\\
		&= \sum_{j} \E_{\boldsymbol{y}} [\mathrm{Cov}(f(\boldsymbol{x})_j,x_j|\boldsymbol{y})].
		\label{supp:eq:1}
	\end{align}
	\cref{eq:1_3} holds since $\E_{x|y}(x_j-y_j)=0$ by the zero-mean noise assumption.
	Let $J$ be a subset of the image sampled by a random downsampling operation $d_s(\boldsymbol{x})$. Then we have the equation,
	\begin{equation}
		\sum_{j} \E_{\boldsymbol{y}} [\mathrm{Cov}(f(\boldsymbol{x})_j,x_j|\boldsymbol{y})] = \frac{m}{|J|} \E_{J} \sum_{j} \E_{\boldsymbol{y}} [\mathrm{Cov}(f(\boldsymbol{x})_j,x_j|\boldsymbol{y})],
		\label{supp:eq:2}
	\end{equation}
	\noindent since every pixel has the chance of selecting $|J|/m=1/s^2$.
	On the right-hand side, the covariance term can be upper-bounded as
	
	\begin{align}
		& \frac{1}{|J|} \sum_{j \in J}\E_{\boldsymbol{y}} [\mathrm{Cov}(f(\boldsymbol{x})_j,x_j|\boldsymbol{y})] \\
		&= \frac{1}{|J|} \sum_{j \in J}\E_{\boldsymbol{y}} [\mathrm{Cov}(f(\boldsymbol{x})_j - f_{M}(d_s(\boldsymbol{x}))_j,x_j|\boldsymbol{y})]\label{supp:eq:3}\\
		& \leq \frac{1}{|J|} \sum_{j \in J}(\E_{\boldsymbol{y}} [\mathrm{Var}(f(\boldsymbol{x})_j - f_{M}(d_s(\boldsymbol{x}))_j|\boldsymbol{y})^{\frac{1}{2}} \cdot \mathrm{Var}(x_j|\boldsymbol{y})^{\frac{1}{2}}])\label{supp:eq:4}\\
		& \leq (\frac{1}{|J|} \sum_{j \in J}\E_{\boldsymbol{y}} [\mathrm{Var}(f(\boldsymbol{x})_j - f_{M}(d_s(\boldsymbol{x}))_j|\boldsymbol{y}) \cdot \mathrm{Var}(x_j|\boldsymbol{y})])^{\frac{1}{2}}\label{supp:eq:4_2}\\
		& \leq (\frac{1}{|J|} \sum_{j \in J}\E_{\boldsymbol{y}} [E[(f(\boldsymbol{x})_j - f_{M}(d_s(\boldsymbol{x}))_j)^2|\boldsymbol{y}]] \cdot 1)^{\frac{1}{2}}]\label{supp:eq:5}\\
		& = (\frac{1}{|J|} \sum_{j \in J}\E[(f(\boldsymbol{x})_j - f_{M}(d_s(\boldsymbol{x}))_j)^2])^{\frac{1}{2}}\label{supp:eq:5_2}\\
		& = (\frac{s^2}{m} \E[(d_s(f(\boldsymbol{x})) - f_{M}(d_s(\boldsymbol{x})))^2])^{\frac{1}{2}}\label{supp:eq:6}
	\end{align}
	\noindent	 In \cref{supp:eq:3}, the equality holds since $x_j$ is excluded in BSN and downsampled surroundings have no correlation with $x_j$ by the assumption. Note that the Inequality (\ref{supp:eq:4}) is derived from the Cauchy-Schwarz inequality, and the Inequality (\ref{supp:eq:4_2}) is derived from Jensen's inequality.
	Also, the Inequality (\ref{supp:eq:5}) holds by the fact that $ \mathrm{Var}(x) \leq E[x^2]$, and by the assumption that input $\boldsymbol{x}$ is normalized \ie,
	$\mathrm{Var}(x_j|\boldsymbol{y}) \leq \mathrm{Var}(x_j) = 1.$
\end{proof}

By the \Cref{proposition}, we use \cref{supp:eq:6} as downsampled invariance loss, 
\begin{equation}
	\mathcal{L}_{inv} = \sqrt{\dfrac{s^2}{m}}||d_s(f(\boldsymbol{x})) -sg(f_M(d_s(\boldsymbol{x})))||_2,
	\label{supp:eq:inv}
\end{equation}
\noindent 
where $sg$ is the stop gradient operation. 
$f_M(d_s(\boldsymbol{x}))$ is introduced to \cref{supp:eq:3} since it has zero correlation with $x_j$. Therefore, we regard it as a constant and adopt a stop-gradient operation in the loss function.
Lastly, we replace the root mean squared error with mean absolute difference in downsampled invariance loss as 
\begin{equation}
	\mathcal{L}_{inv} = \dfrac{s^2}{m}||d_s(f(\boldsymbol{x})) -sg(f_M(d_s(\boldsymbol{x})))||_1.
	\label{supp:eq:inv_final}
\end{equation}

\section{Analysis of Downsampling Ratio in Loss Functions}
We conduct extensive experiments to analyze the effects of the downsampling ratios in $\mathcal{L}_{invRS}$ and $\mathcal{L}_{blind}$.
\Cref{supp:fig:abl_down} shows the PSNR of C-BSN$_{a/b}$ on SIDD validation dataset~\cite{abdelhamed2018high}, where $a$ is the stride of RS in the downsampled invariance loss and $b$ is the stride of S2B in the blind loss.

Using strides less than 4 in the blind loss leads to suboptimal performance, showing that reducing spatial correlation of masked network input is crucial.
Regarding the strides of RS, the performance tends to decrease as the stride increases over 3, while C-BSN with $a=1$ fails to denoise the image.
Although the performance is maximized with C-BSN$_{3/4}$, the performance gap is marginal and falls within the range of variation caused by the randomness of the training process.
Therefore, we adopt C-BSN$_{2/5}$ as a baseline, consistent with AP-BSN~\cite{lee2022ap}.

\section{Ablation on Downsampler of Blind Loss}
We conduct an additional ablation study on the downsampler of blind loss.
We follow the same setting as Section 4.3 in the paper. 
\Cref{supp:tab:abl_ds} reports PSNR and SSIM of the network with different downsampler in the blind loss. 
Regardless of downsampling operations, models trained with small strides show poor performance, which is consistent with the result of \Cref{supp:fig:abl_down}.
Space2batch, with a stride of 5, achieves the highest PSNR and SSIM compared to the other two downsamplers.
Therefore, we employ S2B as the downsampling function for the blind loss.

\begin{figure}[t]
	\centering
	\includegraphics[width=\linewidth]{./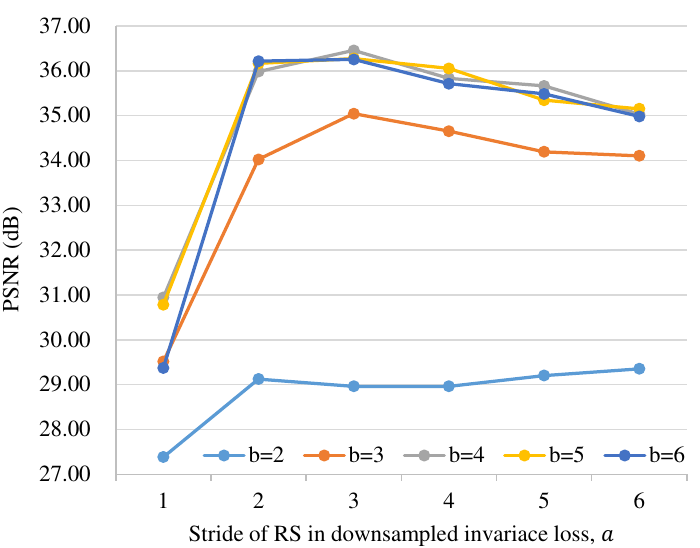}
	\caption{\textbf{PSNR of C-BSN$_{a/b}$ on SIDD validation~\cite{abdelhamed2018high},}
		where $a$ denotes the stride of RS and $b$ denotes the stride of S2B.
	}
	\label{supp:fig:abl_down}
\end{figure}

\section{More Visualized Results}
We present more visual comparisons on SIDD~\cite{abdelhamed2018high} validation and NIND~\cite{brummer2019natural}.
We compare C-BSN with other self-supervised methods, CVF-SID (T)~\cite{neshatavar2022cvf}, CVF-SID (S$^2$), AP-BSN~\cite{lee2022ap}, AP-BSN (R$^3$)~\cite{lee2022ap}, which aim to remove real-world noise.
We use official code from the authors' GitHub with the pre-trained model. 
The denoised results of various scenes are illustrated in \Cref{supp:fig:SIDDval}.

For NIND, we use C-BSN$^\dagger$ which is trained on the test set directly.
\Cref{supp:fig:NIND} shows the noisy images from NIND and its denoised outputs.
We mark ROI with red boxes for each image and present noisy-denoised pairs of cropped patches.

\begin{table}[t]
	\centering
	\caption{\textbf{Ablation on the downsampler of blind Loss.}}
	\begin{tabular}{>{\centering}p{0.1\textwidth}|>{\centering}p{0.05\textwidth}|>{\centering}p{0.1\textwidth}>{\centering\arraybackslash}p{0.1\textwidth}}
		\hline
		downsampler & stride & PSNR(dB) & SSIM \\ \hline
		\multirow{2}{*}{$PD$} &5& 34.83 & 0.912 \\
		&2& 29.11 & 0.715 \\ \hline
		\multirow{2}{*}{$S2B$} &5& \textbf{36.22} & \textbf{0.935} \\
		&2& 25.93 & 0.810 \\ \hline
		\multirow{2}{*}{$RS$} &5& 35.67 &  0.924 \\
		&2& 30.54 & 0.771 \\
		\hline
	\end{tabular}
	\label{supp:tab:abl_ds}
\end{table}

\begin{figure*}[t]
	\centering
	\captionsetup[subfigure]{position=bottom}
	{\includegraphics[width=.98\linewidth]{./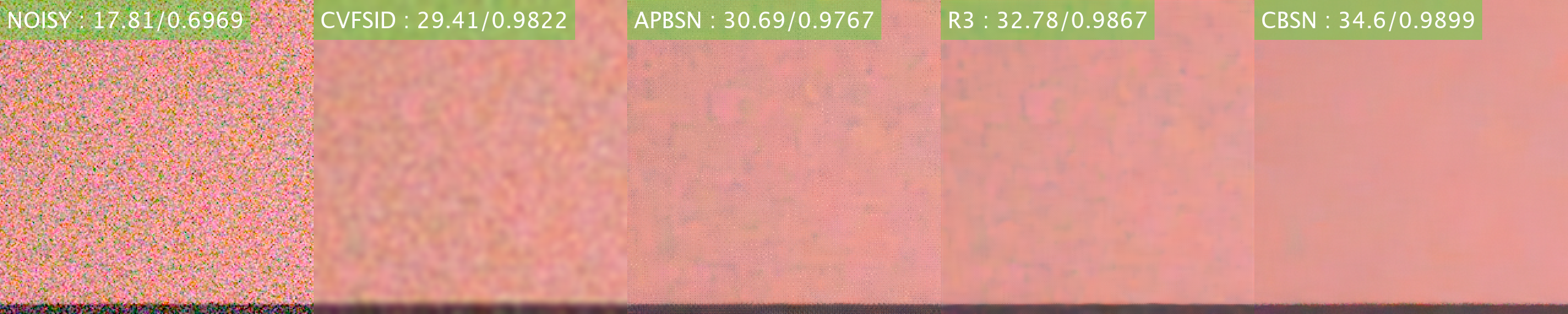}}
	{\includegraphics[width=.98\linewidth]{./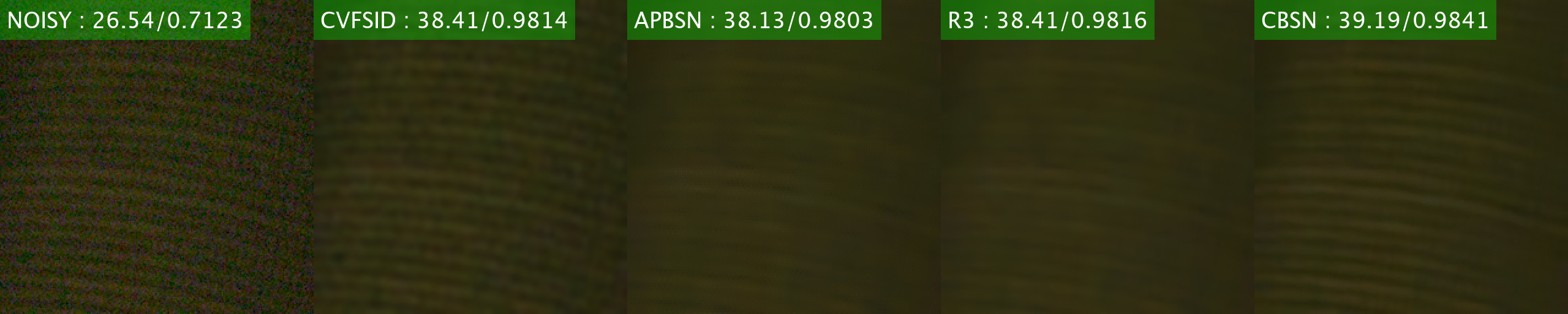}}
	{\includegraphics[width=.98\linewidth]{./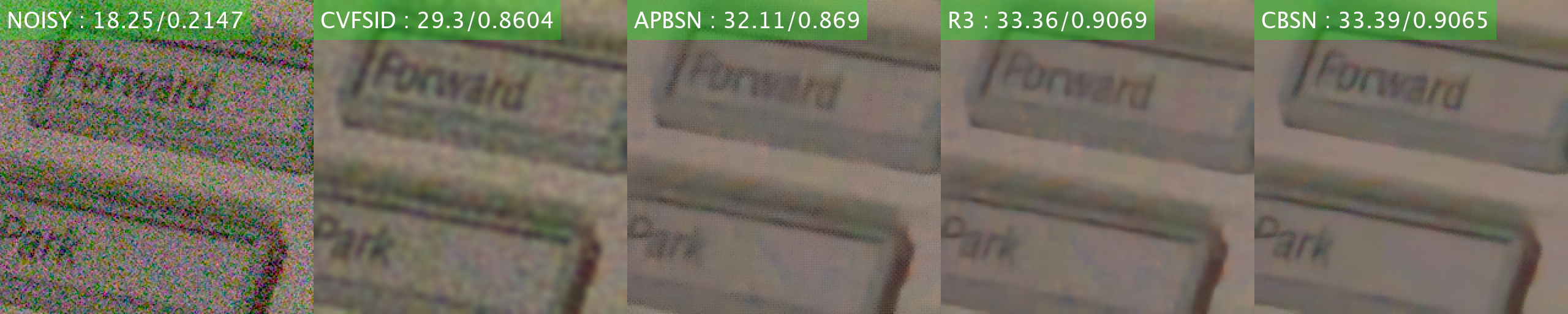}}
	{\includegraphics[width=.98\linewidth]{./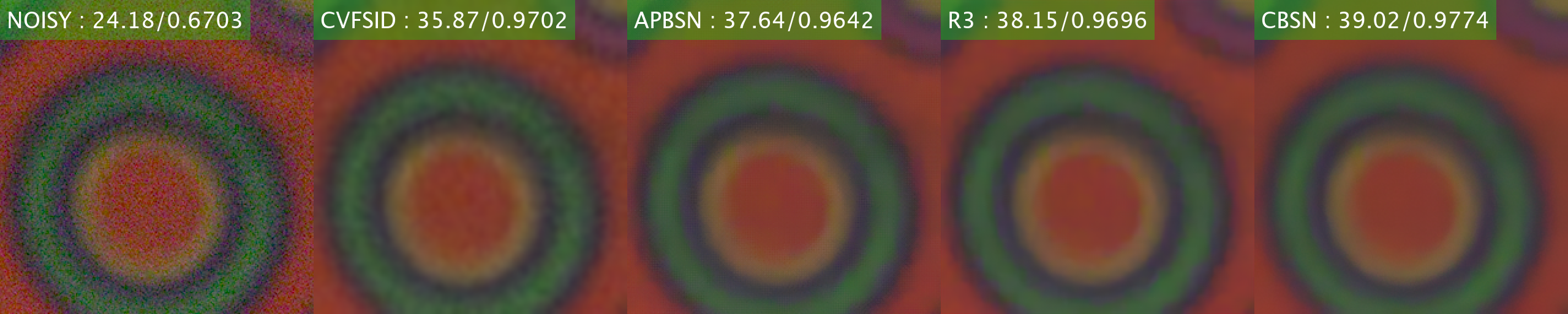}}
	{\includegraphics[width=.98\linewidth]{./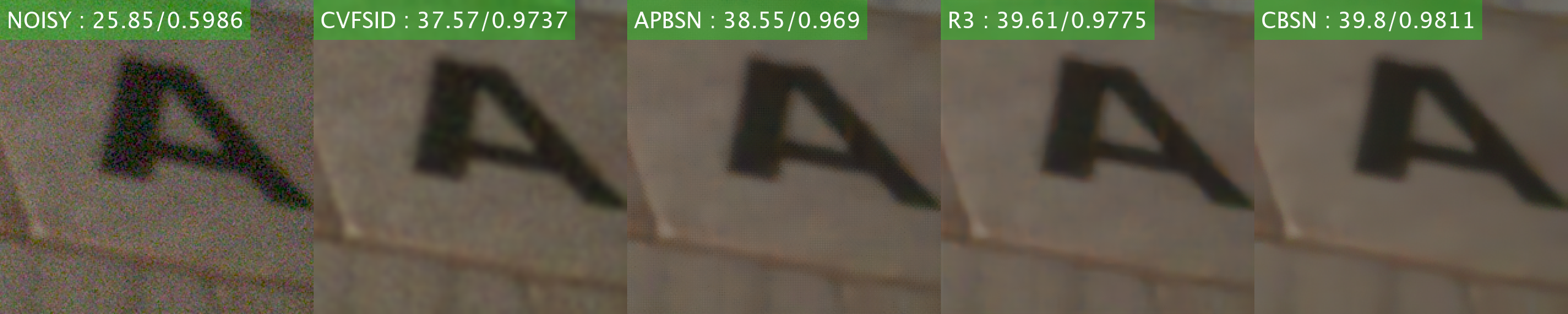}}
	{\includegraphics[width=.98\linewidth]{./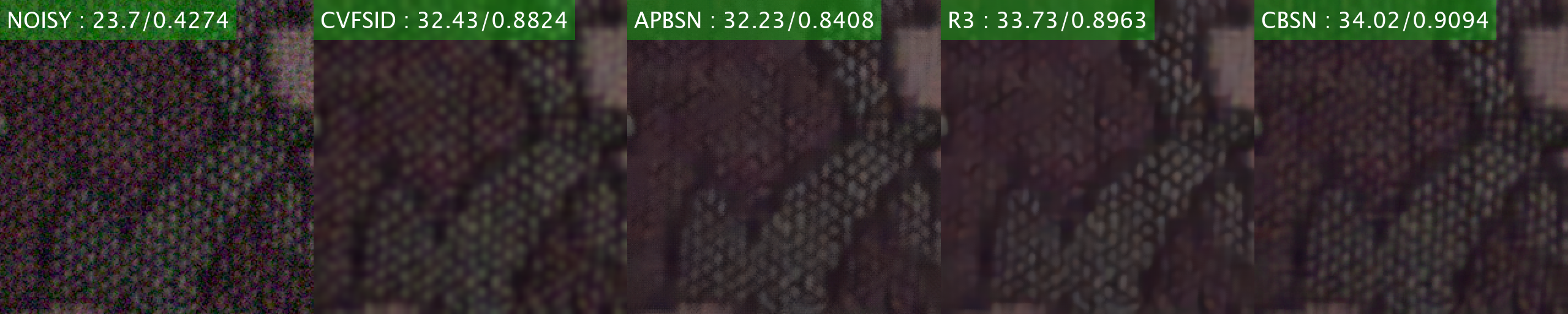}}
	\caption{\textbf{Visual comparison of denoised images on SIDD validation~\cite{abdelhamed2018high}.} 
		We provide PSNR and SSIM in the upper left of the images.
		All images are upsampled by 2 with the nearest neighbor for better comparison.
		Best viewed in pdf.
	}
	\label{supp:fig:SIDDval}
\end{figure*}

\begin{figure*}[t]
	\centering
	\captionsetup[subfigure]{position=bottom}
	\subfloat[NIND\_soap\_ISO6400]{
		\begin{minipage}{.64\linewidth}
				\includegraphics[width=\textwidth]{./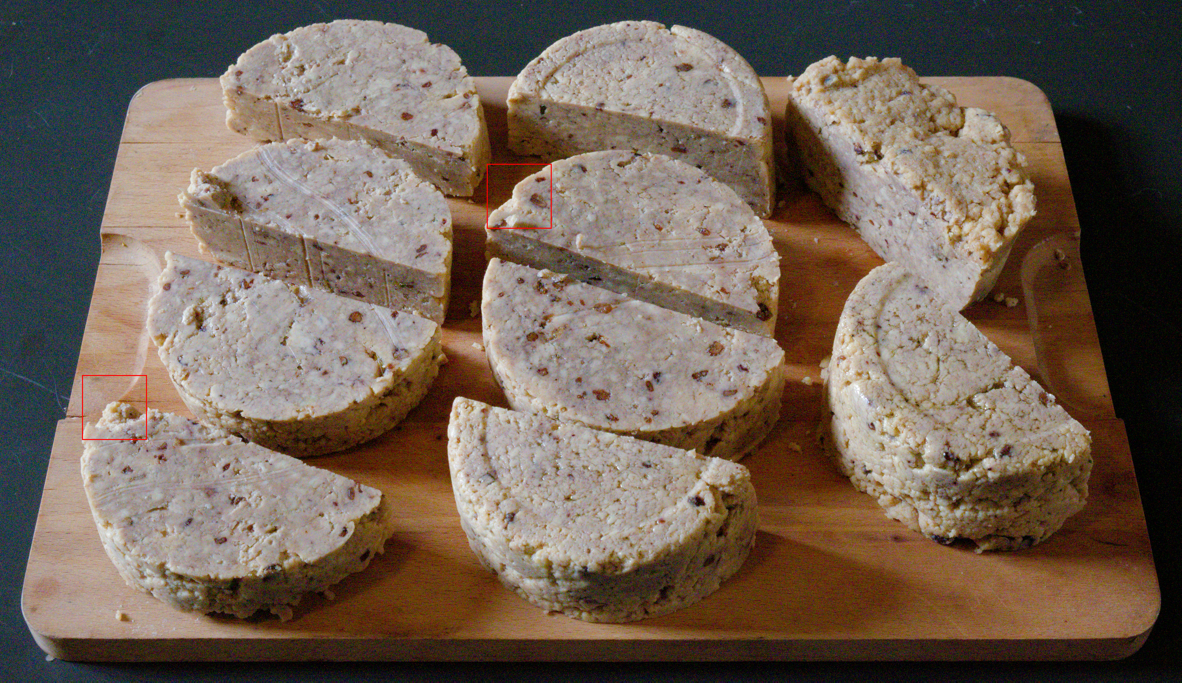}
		\end{minipage}
		\begin{minipage}{.36\linewidth}
				\includegraphics[width=\textwidth]{./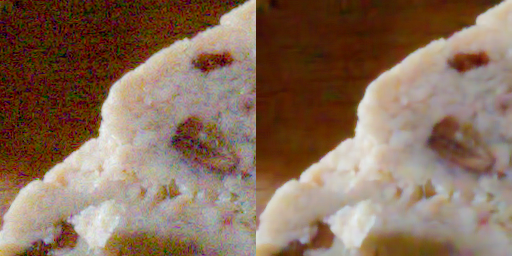}\vspace{4pt}\\
				\includegraphics[width=\textwidth]{./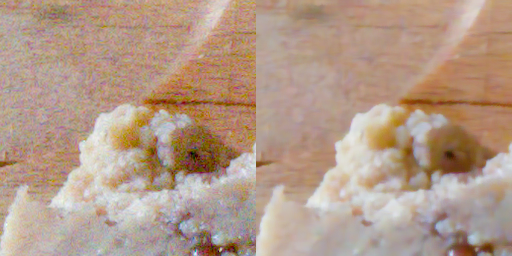}
	\end{minipage}}\\
	\subfloat[NIND\_MuseeL-coral2\_ISOH1]{
		\begin{minipage}{.64\linewidth}
				\includegraphics[width=\textwidth]{./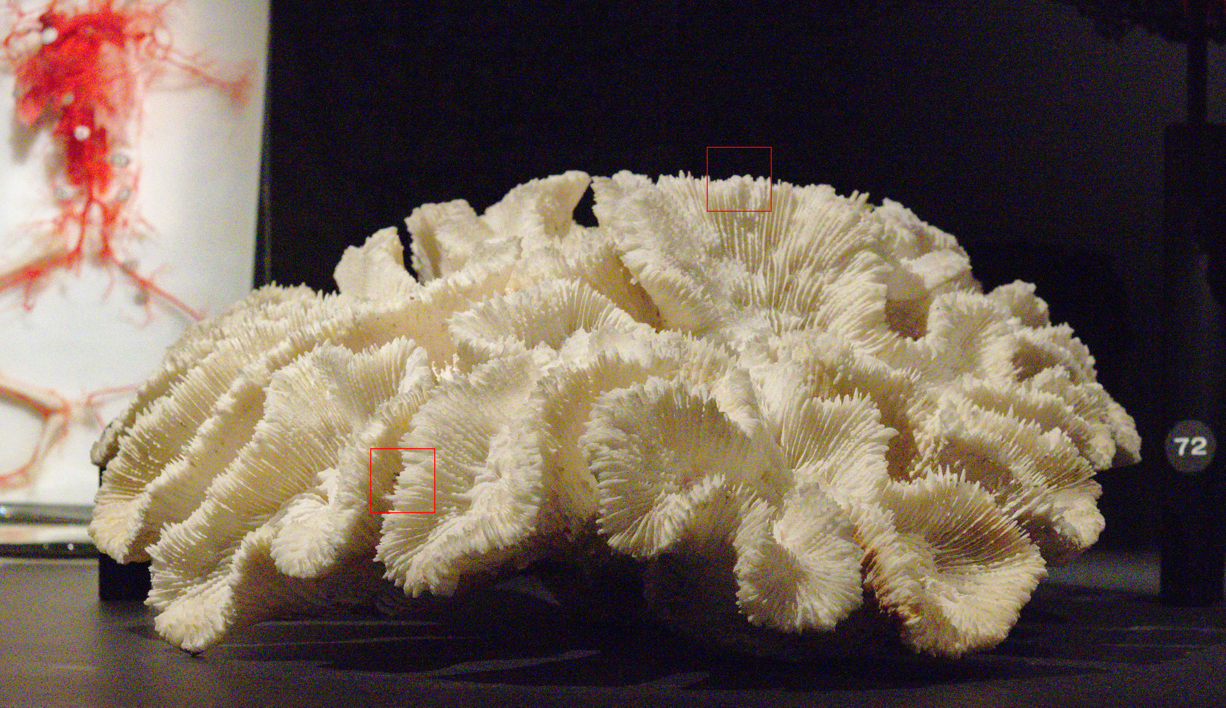}
		\end{minipage}
		\begin{minipage}{.36\linewidth}
				\includegraphics[width=\textwidth]{./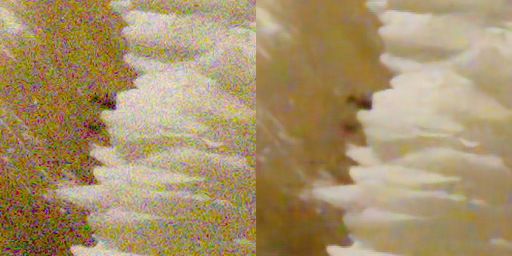}\vspace{4pt}\\
				\includegraphics[width=\textwidth]{./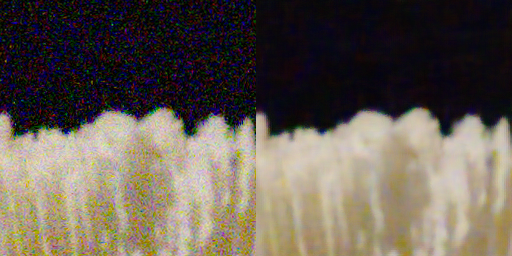}
	\end{minipage}}\\
	\subfloat[NIND\_MVB-LouveFire\_ISOH1]{
		\begin{minipage}{.64\linewidth}
				\includegraphics[width=\textwidth]{./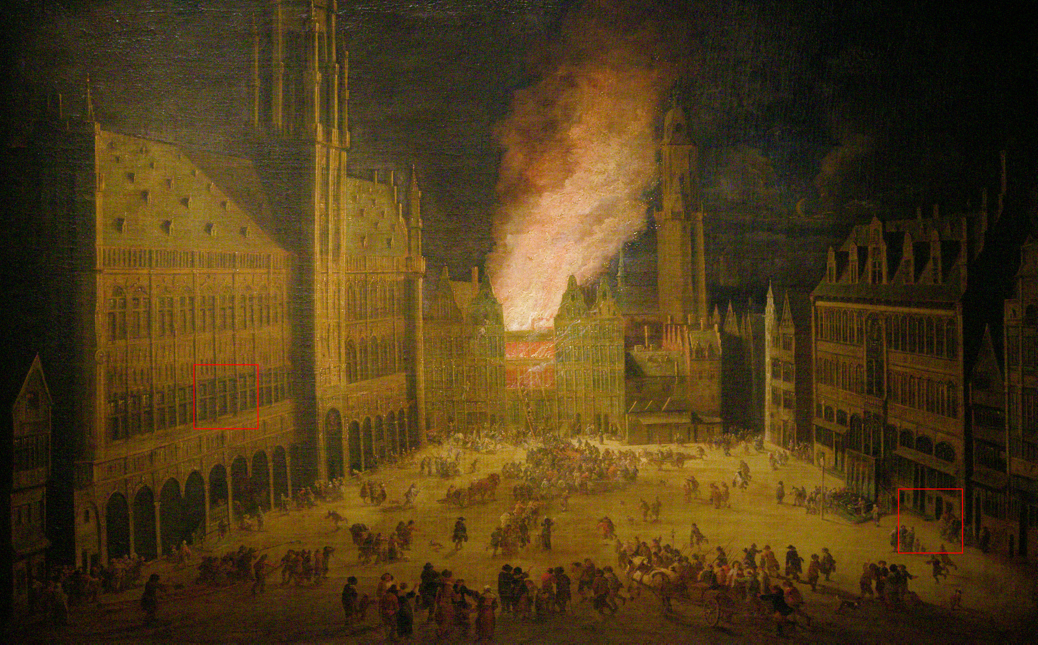}
		\end{minipage}
		\begin{minipage}{.36\linewidth}
				\includegraphics[width=\textwidth]{./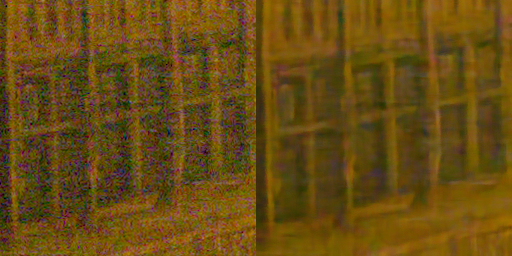}\vspace{4pt} \\
				\includegraphics[width=\textwidth]{./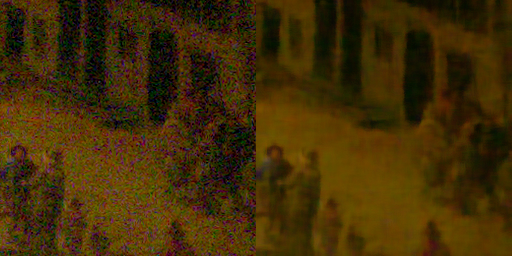}
	\end{minipage}}
	
	\caption{\textbf{C-BSN$^\dagger$ results of  NIND~\cite{brummer2019natural} samples.} (Left) Real noisy images from NIND.
		(Right) Enlarged noisy-Denoised image pairs.}
	\label{supp:fig:NIND}
\end{figure*}

\clearpage
{\small
	\bibliographystyle{ieee_fullname}
	\bibliography{arxiv_version}
}

\end{document}